\begin{document}
\title{Oblivious Robots Performing Different Tasks on Grid Without Knowing their Team Members\thanks{The preliminary version of this paper appeared in the conference ICARA 2023}}
\titlerunning{Performing different tasks by robots}
%
\author{Satakshi Ghosh\inst{1}\orcidID{0000-0003-1747-4037}  \and Avisek Sharma\inst{1}\orcidID{0000-0001-8940-392X} \and Pritam Goswami\inst{1}\orcidID{0000-0002-0546-3894} \and
Buddhadeb Sau\inst{1}\orcidID{0000-0001-7008-6135}}
\authorrunning{S. Ghosh et al.}
%
\institute{Jadavpur University, Kolkata, 700032\\
\email{\{satakshighosh.math.rs,aviseks.math.rs,pritamgoswami.math.rs, buddhadeb.sau\}@jadavpuruniversity.in}}
\maketitle              
\begin{abstract}
Two fundamental problems of distributed computing are Gathering and Arbitrary pattern formation (\textsc{Apf}). These two tasks are different in nature as in gathering robots meet at a point but in \textsc{Apf} robots form a fixed pattern in distinct positions. 
 
In most of the current literature on swarm robot algorithms, it is assumed that all robots in the system perform one single task together. Two teams of oblivious robots deployed in the same system and different teams of robots performing two different works simultaneously where no robot knows the team of another robot is a new concept in the literature introduced by Bhagat et al. [ICDCN'2020].

In this work, a swarm of silent and oblivious robots are deployed on an infinite grid under an asynchronous scheduler. The robots do not have access to any global coordinates. Some of the robots are given input of an arbitrary but unique pattern. The set of robots with the given pattern is assigned the task of forming the given pattern on the grid. The remaining robots are assigned with the task of gathering to a vertex of the grid (not fixed from earlier and not any point where a robot that is forming a pattern terminates). Each robot knows to which team it belongs, but can not recognize the team of another robot. Considering weak multiplicity detection, a distributed algorithm is presented in this paper which leads the robots with the input pattern into forming it and other robots into gathering on a vertex of the grid on which no other robot forming the pattern, terminates.

\keywords{Robots  \and Gathering \and Arbitrary pattern formation \and Infinite grid}

\end{abstract}

\section{Introduction}
In swarm robotics, robots solving some tasks with minimum capabilities is the main focus of interest. In the last two decades, there has been huge research interest in robots working with coordination problems. It is not always easy to use robots with strong capability in real-life applications, as the making of these robots is not at all cost-effective. If a swarm of robots with minimum capabilities can do the same task then it is effective to use swarm robots rather than using robots with many capabilities, as the making of these robots in the swarm is very much cheaper than making robots with many capabilities. Also, it is very easy to design a robot of a swarm due to the fact that they have minimum capabilities. Depending on these capabilities there are generally four types of robot models. These models are $\mathcal{OBLOT}$, $\mathcal{FSTA}$, $\mathcal{FCOM}$ and $\mathcal{LUMI}$. In each of these models, robots are assumed to be autonomous (i.e the robots do not have any central control), identical (i.e the robots are physically indistinguishable), and anonymous (i.e the robots do not have any unique identifiers). Furthermore in the $\mathcal{OBLOT}$ model, the robots are silent (i.e there is no means of communication between the robots) and oblivious (i.e the robots do not have any persistent memory to remember their previous state), in $\mathcal{FSTA}$ model the robots are silent but not oblivious, in $\mathcal{FCOM}$ model the robots are oblivious but not silent and in $\mathcal{LUMI}$ model robots are neither silent nor oblivious. The robots after getting activated operate in a \textsc{Look-Compute-Move} (\textsc{Lcm}) cycle. In the \textsc{Look} phase a robot takes input from its surroundings and then with that input runs the algorithm in \textsc{Compute} phase to get a destination point as an output. The robot then goes to that destination point by moving in the \textsc{Move} phase. The activation of the robots is controlled by a scheduler. There are mainly three types of schedulers considered in the literature. In a synchronous scheduler, time is divided into global rounds. In a fully synchronous (\textsc{FSync}) scheduler, each robot is activated in all the rounds and executes \textsc{Lcm} cycle simultaneously. In a semi-synchronous (\textsc{SSync}) scheduler all robots may not get activated in each round. But the robots that are activated in the same round execute the \textsc{Lcm} cycle simultaneously. Lastly in the asynchronous (\textsc{ASync}) scheduler, there is no common notion of time, a robot can be activated at any time. There is no concept of global rounds. So there is no assumption regarding synchronization.

The Gathering and Arbitrary pattern formation are two vastly studied problems by researchers in the field of swarm robot algorithms. These are one of the fundamental tasks which can be done by autonomous robots in different settings. In the gathering problem, $n$ number of robots initially positioned arbitrarily meets at a point not fixed from earlier within finite time. It is not always easy to meet at a point with very weak robots in the distributed system. Similarly, the Arbitrary pattern formation problem is such that robots have to form a given pattern that is given as input to the robots within a finite time. In literature, there are several works that have considered either gathering or arbitrary pattern formation problem separately. But none of those works consider robots deployed in the same environment working on two different tasks. The environment of robot swarm needs periodic maintenance for making the environment robust from faults and some other factors. So if the same robot swarm deployed in the environment can do the maintenance apart from doing the specific task assigned to them, it would be more cost-effective. From this motivation and practical interest, in \cite{BhagatS20} authors first studied the problem where two teams of oblivious robots work on two different tasks, namely gathering and circle formation on a plane. Here the crucial part is that a robot knows to which team it belongs, but it can not recognize another robot's team. The novelty of the problem would have gone away a bit if the robots are luminous and gathering team robots put a color on a light to indicate which team they belong to. But the main motivation of this problem is to extend the work for more than two different tasks and for assigning different colors for different tasks would make the number of colors unbounded. For this reason, it is convenient to solve the problem with the least possible capabilities for the robot swarm. Also $\mathcal{OBLOT}$ model is more self-stabilized and fault tolerant. For these reasons, in our work, we also consider robots to be oblivious. Now it is challenging to design a distributed algorithm by which two different teams of oblivious robots can do two different tasks simultaneously on a discrete domain because, in any discrete graph, the movements of robots become restricted. So avoiding collision becomes a great challenge. 

In our work, we have provided a collision-less distributed algorithm following which two teams of oblivious robots with weak multiplication detection ability can do two different tasks namely gathering and arbitrary pattern formation simultaneously on an infinite grid under the asynchronous scheduler. Here we assume that the initial configuration is asymmetric. 

\section{Earlier Works}
Arbitrary pattern formation and gathering of robots are two hugely studied problems in the distributed system. In literature, there are many works on these two problems in various settings. Arbitrary pattern formation on a plane was first studied by Suzuki and Yamashita \cite{Suzuki96}. For the grid network, the arbitrary pattern formation was first studied in \cite{BoseAKS20} in the $\mathcal{OBLOT}$ model with full visibility. But in this paper, the algorithm is not move optimal. So in \cite{GHOSH2022} authors have shown on an infinite grid a move optimal \textsc{Apf} algorithm under $\mathcal{OBLOT}$ model and a time optimal \textsc{Apf} algorithm under $\mathcal{LUMI}$ model. Later in \cite{cicerone20} authors studied the \textsc{Apf} on a regular tessellation graph. In an infinite grid, the arbitrary pattern formation problem was studied in \cite{manash22} with opaque robots and in \cite{KunduGGS22} with fat robots. Similarly, for gathering there are many works in an infinite grid. In \cite{d2012gathering}, authors have shown that the gathering is possible on a grid without multiplicity detection. There are many other works (\cite{fischer2017,bhagat2020,cord2016,das2019}) where authors have solved gathering on an infinite rectangular grid with various assumptions. In \cite{pritam22} authors solved the gathering of robots on an infinite triangular grid in \textsc{Ssync} scheduler under one axis agreement and with one-hop visibility. But in all these works they only consider that all robots perform one individual task. But in \cite{BhagatS20} authors first showed that two specific but different tasks can be done by robots simultaneously on a plane. They showed that gathering and circle formation can be done by oblivious robots in a plane simultaneously with two different teams of robots using one-axis agreement. 

So here we are interested to show that two different tasks namely gathering and the arbitrary pattern formation of oblivious robots can be done on an infinite grid simultaneously under an asynchronous scheduler without any axis agreement. This paper first time deals with this problem under a discrete environment. In \cite{cicerone20}, authors considered multiplicity points in the target pattern that needs to be formed. So their work is also capable of forming an arbitrary pattern along with an additional multiplicity point. But by following their algorithm a robot on team gathering might end up forming a pattern and also a robot on team arbitrary pattern formation might end up on the multiplicity point, which is not the required solution to our problem. The algorithm proposed in paper \cite{GHOSH2022} uses similar technique to select $head$ and $tail$ robots and also the forming of the fixed pattern. But in \cite{GHOSH2022} the robots have no multiplicity detection capability and throughout the algorithm, no multiplicity points will form. So from the existing previous algorithms (as in \cite{GHOSH2022,cicerone20}), two different tasks are not trivially solved by robots as a robot does not know in which team another robot belongs.
\section{Model and Problem Statement}
\paragraph*{Robots} Robots are anonymous, identical, and oblivious, i.e. they have no memory of their past rounds. They can not communicate with each other. There are two teams between the robots. One is $\mathcal{T}_{Apf}$ and the other team is $\mathcal{T}_g$. A robot $r$ only knows that in which team it belongs to between this two. But a robot can not identify to which team another robot belongs. All robots are initially in distinct positions on the grid. The robots can see the entire grid and all other robots' positions which means they have global visibility. This implies the robots are transparent and hence the visibility of a robot can not be obstructed by another robot. Robots have no access to any common global coordinate system. They have no common notion of chirality or direction. A robot has its local view and it can calculate the positions of other robots with respect to its local coordinate system with the origin at its own position. There is no agreement on the grid about which line is  $x$ or $y$-axis and also about the positive or negative direction of the axes. As the robots can see the entire grid, they will set the axes of their local coordinate systems along the grid lines. Also, robots have weak multiplicity detection capability, which means a robot can detect a multiplicity point but cannot count the number of robots present at a multiplicity point.

\paragraph*{Look-Compute-Move cycles}
An active robot operates according to the Look-Compute-Move cycle. In each cycle a robot takes a snapshot of the positions of the other robots according to its own local coordinate system (\textsc{Look}); based on this snapshot, it executes a deterministic algorithm to determine whether to stay put or to move to an adjacent grid point (\textsc{Compute}); and based on the algorithm the robot either remain stationary or makes a move to an adjacent grid point (\textsc{Move}). When the robots are oblivious they have no memory of past configurations and previous actions. After completing each \textsc{Look-Compute-Move} cycle, the contents in each robot's local memory are deleted.

\paragraph*{Scheduler} 
We assume that robots are controlled by a fully asynchronous adversarial scheduler (\textsc{ASync}). The robots are activated independently and each robot executes its cycles independently. This implies the amount of time spent in \textsc{Look}, \textsc{Compute}, \textsc{Move}, and inactive states are finite but unbounded, unpredictable, and not the same for different robots. The robots have no common notion of time.

\paragraph*{Movement} In discrete domains, the movements of robots are assumed to be instantaneous. This implies that the robots are always seen on grid points, not on edges. However, in our work, we do not need this assumption. In the proposed algorithm, we assume the movements are to be instantaneous for simplicity. However, this algorithm also works without this. The movement of the robots is restricted from one grid point to one of its four neighboring grid points.
\subsection{Problem Description}
In this work, we define a problem on an infinite grid where $n$ oblivious, identical, autonomous robots are dispersed on the vertices of the infinite grid. In \cite{BhagatS20} they solved two conflicting tasks by robots on a plane where one team of robots gather at a point and another team of robots form a circle on the plane. But in this work robots are on an infinite grid and they solve two different distributed problems. Here are two teams of oblivious robots where one team is $\mathcal{T}_g$ and the other team is $\mathcal{T}_{Apf}$. The goal of the robots of ${\mathcal{T}_g}$ is to meet all the robots of this team to a point on an infinite grid. Another team is $\mathcal{T}_{Apf}$ where the goal of this team is to form a particular pattern that is given as input. The next section provides the algorithm for solving this problem. 
\section{The Main Algorithm}
\subsection{Global Coordinate Agreement}\label{global}
Here we have to first fix the global coordinate system and then we aim to gather the $\mathcal{T}_g$ robots to the origin and form the arbitrary pattern by $\mathcal{T}_{Apf}$ robots with respect to the coordinate $(0,2)$. So let us consider an infinite grid $G$ as a cartesian product $P \times P$, where $P$ is an infinite (from both sides) path graph. The infinite grid $G$ is embedded in the Cartesian Plane $R^2$. In this work, some robots will gather at a point and other robots will form an arbitrary pattern on the grid. Here we are assuming that the initial configuration is asymmetric. A robot can form a local coordinate system aligning the axes along the grid lines but the robots do not have an access to any global coordinate system even. To form the target pattern the robots need to reach an agreement on a global coordinate system. In this subsection, we will provide the details of the procedure that allows the robots to reach an agreement on a global coordinate system.
 \begin{figure}[t!]
\centering
\includegraphics[width=.7\linewidth]{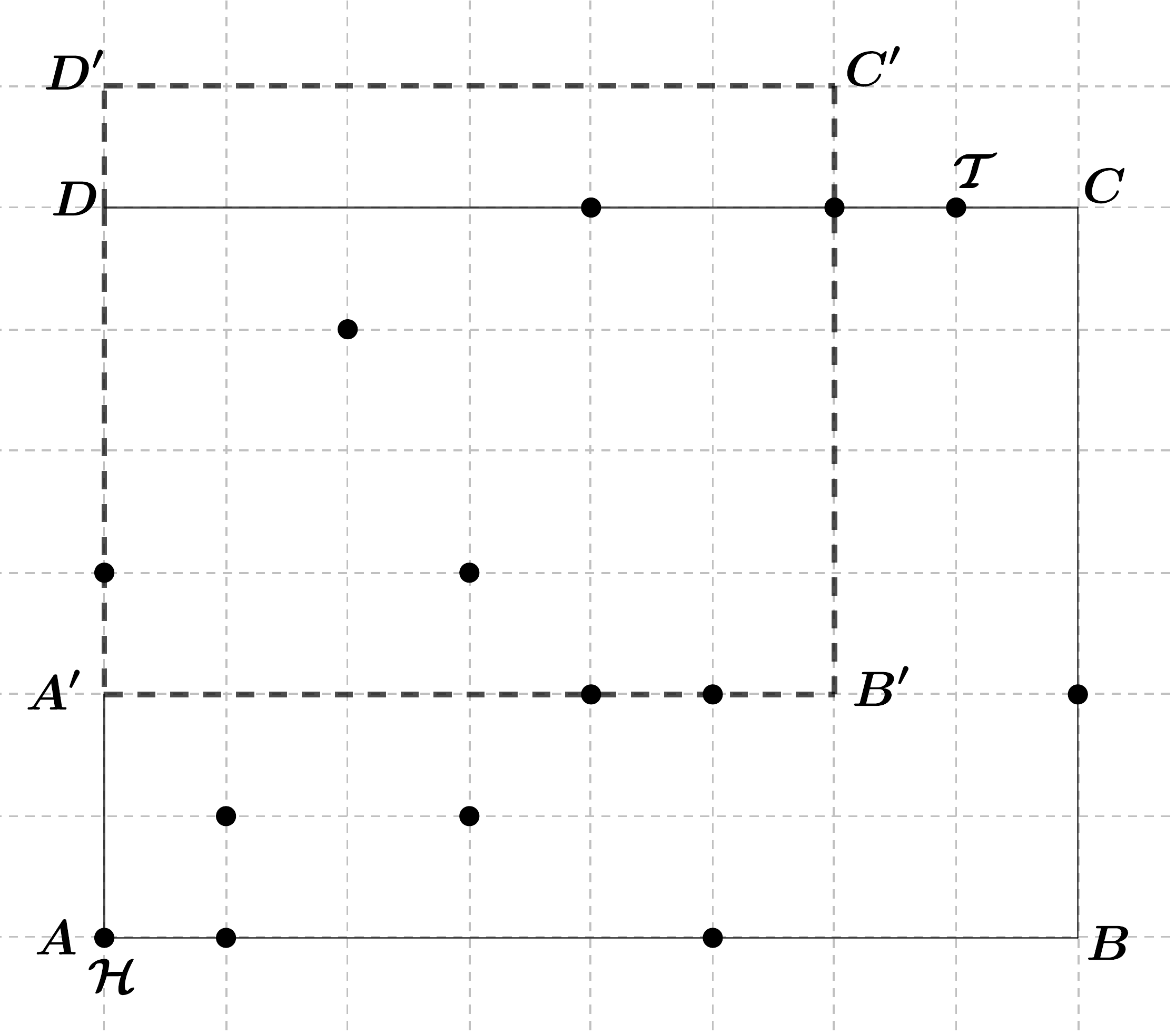}
     \caption{$ABCD$ is the smallest enclosing rectangle of initial configuration. $\mathcal{H}$ and $\mathcal{T}$ are head and tail of the configuration. $A$ is the origin and the target configuration will be embedded with respect to $(0.2)$. $A'B'C'D'$ is the smallest enclosing rectangle of target configuration}
     \label{Fig:n}
\end{figure}     

For a given configuration ($\mathcal{C}$) formed by the robots, let the smallest enclosing rectangle, denoted by $s.rect(\mathcal{C})$, be the smallest grid-aligned rectangle that contains all the robots. Suppose the $s.rect$ of the initial configuration $\mathcal{C}_I$ is a rectangle $\mathcal{R}=ABCD$ of size $m\times n$, such that $m>n>1$. Let $|AB|=n$. Then consider the binary string $\{p_i\}$ associated with a corner $A$, $\lambda_{AB}$ as follows. Scan the grid from $A$ along the side $AB$ to $B$ and sequentially all grid lines of $s.rect(\mathcal{C_I})$ parallel to $AB$ in the same direction. And $p_i=0$, if the position is unoccupied and $p_i=1$ otherwise. Similarly construct the other binary strings $\lambda_{BA}$, $\lambda_{CD}$ and $\lambda_{DC}$. Since the initial configuration is asymmetric we can find a unique lexicographically largest string. If $\lambda_{AB}$ is the lexicographically largest string, then $A$ is called the leading corner of $\mathcal{R}$.

Next, suppose $\mathcal{R}$ is an $m\times m$ square, then consider the eight binary strings $\lambda_{AB}$, $\lambda_{BA}$, $\lambda_{CD}$, $\lambda_{DC}$, $\lambda_{BC}$, $\lambda_{CB}$, $\lambda_{AD}$, $\lambda_{DA}$. Again since the initial configuration is asymmetric, we can find a unique lexicographically largest string among them. Hence we can find a leading corner here as well.

Next, let $\mathcal{C}_I$ be a line $AB$, we will have two strings $\lambda_{AB}$ and $\lambda_{BA}$. Since $\mathcal{C}_I$ is asymmetric then $\lambda_{AB}$ and $\lambda_{BA}$ must be distinct. If $\lambda_{AB}$ is lexicographically larger than $\lambda_{BA}$, then we choose $A$ as the leading corner.

Now for either case, if $\lambda_{AB}$ is the lexicographically largest string then the leading corner $A$ is considered as the origin, and the $x-$ axis is taken along the $AB$ line. If $\mathcal{C}_I$ is not a line then the $y-$ axis is taken along the $AD$ line. If $\mathcal{C}_I$ is a line then the $y-$ coordinate of all the positions of robots is going to be zero and in this case, the $y-$ axis will be determined later.  
For any given asymmetric configuration $\mathcal{C}$ if $\lambda_{AB}$ is the largest associated binary string to $\mathcal{C}$ then the robot causing the first non-zero entry in $\lambda_{AB}$ is called $head$ let $\mathcal{H}$ and the robot causing last non zero entry in $\lambda_{AB}$ is called as $tail$ let $\mathcal{T}$. We denote the $i^{th}$ robot of the $\lambda_{AB}$ string as $r_{i-1}$. A robot other than the $head$ and $tail$ is called \textit{\textbf{inner robot}}. Further we denote  $\mathcal{C}'=\mathcal{C}\setminus\{tail\}$ and $\mathcal{C}''=\mathcal{C}\setminus\{tail, head\}$ and $\mathcal{C}'''=\mathcal{C}\setminus\{head\}$.

Let $\mathcal{C}_{target}$ be the target configuration for the $\mathcal{T}_{Apf}$ robots and $s.rect(\mathcal{C}_{target})=\mathcal{R}_{target}$. Let $\mathcal{R}_{target}$ is a rectangle of size $M\times N$ with $M\ge N$. We can calculate the binary strings associated with corners in the same manner as previously. $\mathcal{C}_{target}$ is expressed in the coordinate system with respect to the point $(0,2)$, where $(0,2)$ will be the leading corner. Let the $A'B'C'D'$ be the smallest rectangle enclosing the target pattern with $A'B'\le B'C'$. Let $\lambda_{A'B'}$ be the largest (may not be unique) among all other strings. Then the target pattern is to be formed such that $A$ is the origin and pattern embedded with respect to the position $(0,2)$, $A'B'$ direction is along the positive $x$ axis and $A'D'$ direction is along the positive $y$ axis. If the target pattern has symmetry then we have to choose any one among the larger string and fixed the coordinate system. So as previously said $head_{target}$ will be the first one and $tail_{target}$ will be the last one in the $s.rect$ of $\mathcal{C}_{target}$. Also, we define
$\mathcal{C}_{final}$ is the configuration when all $\mathcal{T}_g$ robots are at same point and $\mathcal{C}_{target}$ configuration is formed.
$\mathcal{C}_{final}'=\mathcal{C}_{final}\setminus\{tail_{target}\}$, $\mathcal{C}_{final}''=\mathcal{C}_{final}\setminus\{head_{target},tail_{target}\}$, $\mathcal{C}_{final}'''=\mathcal{C}_{final}\setminus\{head_{target}\}$.  
We denote the $head_{target}$ position as $t_1$ and $tail_{target}$ position as $t_{k}$. Let $H_i$ be the horizontal line having the height $i$ from the x-axis. Let for each $i$ there are $p(i)$ target positions on $H_i$. We denote the target positions of $H_0$ as $t_1$......$t_{p(0)-1}$ from left to right. For $H_1$ we denote the target positions as $t_{p(0)}$ to $t_{p(0)+p(1)-1}$ from right to left. For $H_2$ we denote the target positions as $t_{p(0)+p(1)}$ to $t_{p(0)+p(1)+p(2)-1}$ from right to left. Similarly, we can denote all other target positions on $H_i$, $i>0$ except $tail_{target}$.

\subsection{Brief Discussion of Algorithm}
Let the initial configuration is $\mathcal{C_I}$, the final configuration is $\mathcal{C}_{final}$. Robots are operating on an infinite grid. There are two teams of robots where one is $\mathcal{T}_g$ and another one is $\mathcal{T}_{Apf}$ (let us assume that $|\mathcal{T}_g| > 2$). $\mathcal{T}_g$ robots will gather at the same point on the grid and the robots of $\mathcal{T}_{Apf}$ will form a pattern on the grid. Note that the gathering point will not be a target point of the target pattern. 
A robot only knows in which team it belongs between $\mathcal{T}_g$ and $\mathcal{T}_{Apf}$. A robot has no information about to which team other robots belong. Robots first fix the global coordinate system. The target will form with respect to the point (0,2) and the gathering will occur at the origin. When a robot awakes up it first calculates the $head$ robot and $tail$ robot. In the first three stages, the $head$ robot will move to the origin and the $tail$ robot will expand the smallest enclosing rectangle for maintaining the asymmetry of the configuration. Then in the stage~4 all the inner robots now move to the x-axis and make a line on the x-axis. Note that a line is called \textit{\textbf{compact line}} when there is no empty grid point between two robots. So robots on the x-axis first make the line compact then one by one inner robot from upward horizontal lines move down to the x-axis. After this in stage~5 the robot which is not on a line say $r$, will move to its closest endpoint of the line if it $\in \mathcal{T}_g$ or it will move one step upward if it belongs to $\mathcal{T}_{Apf}$. In the next stage after a multiplicity point is formed or calculating the position of the $tail$, robots on the x-axis move to the fixed target positions which are either the origin or the target positions. Robots will move from right to left with respect to the position of the $head$ sequentially. As all inner robots are on the x-axis so the robot $\in \mathcal{T}_g$ can always find the neighboring grid lines empty, so the robot can move to the origin by choosing any path to the origin if all the robots are on a line. If a robot can see the $tail$ robot then it will choose the neighboring line of the x-axis in the direction of the $tail$ for its movement. In this way, one by one all inner robots move to their fixed target positions. Next if $tail$ sees that all inner robots move to their target positions it will move to their fixed position. In the last stage if the $head$ robot is in the gathering team then it will not move but if it is in the \textsc{Apf} team then when without its position all the pattern formation is done it will move to its position. Note that within the algorithm no two robots collide without the gathering point and the coordinate system remains unchanged. Within finite time all $\mathcal{T}_g$ robots move to the same point and the $\mathcal{T}_{Apf}$ robots form the fixed pattern that is given as input.

\vspace{0.01\linewidth}
\begin{table}
\small
\begin{center}
\caption{If any of the Boolean variable $P_i$ where $0\le i \le 14$ on the left column is true then the corresponding condition on the right column is satisfied and vice versa.} 
 \label{TABLE-1}
\begin{tabular}{ | p{1.5em} | p{7.5cm}| } 
   \hline
   $P_0$ & $\mathcal{C}=\mathcal{C}_{final}$ \\
   \hline
    $P_1$ & $\mathcal{C}'=\mathcal{C}'_{final}$\\
    \hline
    $P_2$ &  All $\mathcal{T}_g$ robots are at a same point\\
    \hline
    $P_3$ &  $\mathcal{C}_{Apf}=\mathcal{C}_{target}$\\
    \hline
    $P_4$ & The current configuration is asymmetric \\
    \hline
    $P_5$ & $m = \max{\{N,n\}}+2$ \\
    \hline
     $P_6$ & $m = 2.\max{\{M,V\}}$ where $V$ is the length of the vertical side of the smallest enclosing rectangle of $\mathcal{C}'$ \\
    \hline
     $P_7$ & The $head$ in $\mathcal{C}$ is at the origin\\
    \hline
     $P_8$ & $n\geq \max{\{N+1,H+1,k\}}$ where $H$ is the length of the horizontal side of the smallest enclosing rectangle of $\mathcal{C}'$\\
    \hline
     $P_9$ & Line formation on $x$-axis without $tail$\\
    \hline
     $P_{10}$ & $\mathcal{C}'$ has a non-trivial reflectional symmetry with respect to a vertical line\\
    \hline
    $P_{11}$ & $\mathcal{C}'''=\mathcal{C}'''_{final}$\\
    \hline
    $P_{12}$ &  $\mathcal{C}''=\mathcal{C}''_{final}$ \\
    \hline
    $P_{13}$ &  $m\geq \max{\{N,n\}}+2$, $m \geq 2.\max{\{M,V\}}$  and $m$ is odd\\
    \hline
    $P_{14}$ & There exist a multiplicity point \\
    \hline
 
\end{tabular}
\vspace{0.01\linewidth}
\end{center}
\end{table}
\vspace{0.01\linewidth}

\subsection{Description of the Stages}
The main difficulty of this problem is a robot does not know which team another robot belongs to. The robot only knows about its own team. Here we will show that there will not arise any symmetry and no collision will occur. The global coordinate system will also not change. The algorithm is divided into eight stages. In this situation, the global coordinate will fix as we mention in sec~\ref{global}

\paragraph{Stage-1} \textit{Input:} $\{P_4 \land \neg P_{14} \land \neg (P_5 \land P_6)\}$ is true.\\
The $tail$ robot will move upwards and all other robots will remain static. \\
\textit{Aim:} The aim is to make ${\{P_5\land P_6}\}$ = true. After finite number of moves by $tail$ robot stage-1 completes with $\{P_4 \land \neg P_{14} \land P_5 \land P_6\}$ is true. 
\begin{theorem}
    If we have an asymmetric configuration $\mathcal{C}$ in stage 1 at some time $t$, then 
    \begin{enumerate}
        \item after one move by the tail towards upward, the new configuration is still asymmetric and the coordinate system remains unchanged.
        \item after a finite number of moves by the tail towards upward, stage~1 terminates with $(P_5 \land P_6)$= true.
    \end{enumerate}
\end{theorem}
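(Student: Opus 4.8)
The plan is to fix a convenient representative of $\mathcal{C}$'s symmetry class and then reduce both parts to a single statement about the corner strings. Using the normalisation of Section~\ref{global}, place the leading corner $A$ at the origin so that $\lambda_{AB}$ is the lexicographically largest corner string (and the \emph{unique} largest, since $\mathcal{C}$ is asymmetric: any coincidence between two of the four corner strings forces a reflection or a half-turn), with the $x$-axis along $AB$, horizontal extent $n$, and the $y$-axis along $AD$, vertical extent $m>n$. Thus $s.rect(\mathcal{C})$ is a non-square rectangle with its long side vertical, the $head$ is the leftmost robot of the bottom row $AB$, and the $tail$ is the rightmost robot of the topmost occupied row. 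Write $\mathcal{C}$ for the configuration before the tail's step and $\mathcal{C}^{*}$ for the one after, and let $S:=\mathcal{C}\setminus\{tail\}=\mathcal{C}^{*}\setminus\{tail\}$ (the same robot removed in both); in particular $n$, and the quantities $N,M,V$ that occur in $P_{5}$ and $P_{6}$ (they depend only on the fixed target pattern and on $S$), stay constant throughout Stage~1. I will treat in detail only the case where $s.rect(\mathcal{C})$ is a non-square rectangle, which is the shape that persists once any tail step has been taken; the first step out of a square or a line is handled by the same kind of comparison together with the stipulations of Section~\ref{global}.

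For Part~1, observe first that the step leaves the horizontal extent equal to $n$ and raises the vertical extent to $m+1>n$, so $\mathcal{C}^{*}$ is again a non-square rectangle with long side vertical and the same four candidate corners, and the moved robot, alone on the new topmost row, is the rightmost robot of that row. It is then enough to prove that the string $w^{*}$ read from corner $A$ in $\mathcal{C}^{*}$ is still the unique largest of the four corner strings of $\mathcal{C}^{*}$: this gives at once that the leading corner, the origin and both axes are unchanged (so the coordinate system is preserved), that $\mathcal{C}^{*}$ admits no nontrivial symmetry (a configuration whose corner-string maximum is attained uniquely cannot be invariant under a reflection or a half-turn of its smallest enclosing rectangle), and, the last $1$ of $w^{*}$ being the moved robot, that the moved robot is again the $tail$. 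To prove maximality I would use the row-block form of the strings: if $w=w_{0}w_{1}\cdots w_{h}$ is the old maximal string ($w_{i}$ the length-$n$ indicator of row $i$, and $w_{h}$ ending with the tail's $1$ at column $x_{t}$), then $w^{*}=w_{0}\cdots w_{h-1}\,w_{h}^{-}\,w_{h+1}^{+}$, where $w_{h}^{-}$ is $w_{h}$ with the $1$ at column $x_{t}$ replaced by $0$ and $w_{h+1}^{+}$ is a length-$n$ block with a single $1$ at column $x_{t}$; the other three strings of $\mathcal{C}^{*}$ are obtained from this block sequence by reversing the block order and/or reversing each block, exactly as the three non-maximal strings of $\mathcal{C}$ arise from $w$. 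I would then compare $w^{*}$ with each competitor block by block: the analogous comparison in $\mathcal{C}$, where $w$ is strictly maximal, already pins down the first block on which the two readings diverge and which one wins; deleting the tail only moves that divergence earlier in favour of the $A$-directed reading, and prepending the sparse block $w_{h+1}^{+}$ cannot overturn this because the bottom block $w_{0}$ is unchanged. The residual ties---when $\lambda_{AB}$ first beat its competitor only on the top row and, in addition, the bottom and top rows each hold a single robot---are closed by descending further into the blocks: the comparison in $\mathcal{C}$ forces the intermediate rows to be empty in a telescoping fashion, and a configuration all of whose robots lie on its bottom and top rows has at most two robots, hence is symmetric, contradicting the asymmetry of $\mathcal{C}$. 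This would complete Part~1.

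For Part~2 I would iterate Part~1. Every activation in Stage~1 keeps the configuration asymmetric with the same coordinate system, so the moved robot remains the $tail$ and the prescribed move (one step up) is always well defined; moreover its destination cell lies strictly above every robot of $S$, hence is empty, so no multiplicity point is ever created and $\neg P_{14}$ is preserved. Each step increases the vertical extent $m$ by exactly one, while the value of $m$ demanded by $P_{5}\wedge P_{6}$ depends only on the fixed constants $n,N,M,V$; hence after finitely many steps $m$ attains that value and $P_{5}\wedge P_{6}$ becomes true. As long as $\neg(P_{5}\wedge P_{6})$ holds, the Stage~1 guard $P_{4}\wedge\neg P_{14}\wedge\neg(P_{5}\wedge P_{6})$ is satisfied, so the tail does keep stepping up; the first time $P_{5}\wedge P_{6}$ holds the guard fails, the tail stops, and since no other robot ever moves in Stage~1 the execution remains in the state $\{P_{4}\wedge\neg P_{14}\wedge P_{5}\wedge P_{6}\}$, which is exactly the asserted termination.

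The hard part is Part~1, specifically the lexicographic bookkeeping showing that the leading corner survives the tail's step, and inside it the tie cases where one must chase the first divergence through several row-blocks and then invoke the ``at least three robots'' property of asymmetric configurations. Part~2 is routine once Part~1 is in hand; the only point I would flag is that the argument tacitly needs the value of $m$ forced by $P_{5}\wedge P_{6}$ to be attainable, i.e.\ $\max\{N,n\}+2=2\max\{M,V\}$, which I read as part of how the target pattern and the global coordinate system are set up in Section~\ref{global}.
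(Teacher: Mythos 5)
Your proposal is correct and follows essentially the same route as the paper: both arguments reduce everything to showing that after the tail's upward step the string read from the leading corner $A$ remains the strictly largest of the corner strings, by tracking how deleting the tail's $1$ from the old top row and prepending a sparse new top row perturbs each lexicographic comparison. Your row-block bookkeeping is a cleaner organization of the paper's comparison of the tail's positions $i$ and $j$ in $\lambda_{AB}$ versus $\lambda_{BA}$, and you additionally spell out the comparisons against the top-corner strings, the telescoping tie cases, and Part~2 (termination), none of which the paper's proof actually writes out. The concern you flag at the end is real: as printed, $P_5$ and $P_6$ are equalities, so Stage~1 can only terminate if $\max\{N,n\}+2=2\max\{M,V\}$ happens to hold (and $P_{13}$ further wants $m$ odd); consistency with $P_{13}$ strongly suggests these are meant to be inequalities of the form $m\ge\cdot$, under which your finite-termination argument for Part~2 goes through verbatim.
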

\begin{proof}
    Let $\mathcal{C}$ be the asymmetric configuration where $ABCD$ be the smallest enclosing rectangle of the initial configuration and $|AB|=n$, $|AD|=m$, $m \ge n$. By one move of the tail robot towards upward, the new smallest enclosing rectangle is $ABC'D'$. Let $r$ be the tail robot and after the movement of $r$ it is now on the edge $C'D'$. As $\lambda_{AB}$ is the largest lexicographically string. So $\lambda_{AB} > \lambda_{BA}$. Now the new strings associated to the corner $A$ of the smaller side of $ABC'D'$ be $\lambda'_{AB}$ and $\lambda'_{BA}$.

    Let $r$ be the only robot on $CD$. Then it is easy to see that $\lambda'_{AB} > \lambda'_{BA}$. But when there are more than one robot on $CD$, then we have to show that by movement of tail robot $\lambda'_{AB} > \lambda'_{BA}$ is true.
    Let $r$ corresponds to the $i^{th}$ term and $j^{th}$ term of $\lambda_{AB}$ and $\lambda_{BA}$. If $i=j$ then $r$ is the middle robot of $CD$. Then when tail moves upward one step, the increase number of 0's of $\lambda'_{AB}$ and $\lambda'_{BA}$ are same. So $\lambda'_{AB} > \lambda'{BA}$. Now if $i<j$ then if we calculate the binary strings of $\lambda_{AB}$ and $\lambda_{BA}$ then $tail$ will be appear earlier in $BA$, than $AB$. Now if we calculate the binary strings of first $j$ term of AB and BA then $\lambda_{BA}|_j$ $>$ $\lambda'_{BA}|_j$. Also $\lambda_{AB}|_j$ $>$ $\lambda'_{AB}|_j$. But $\lambda_{AB}$ $>$ $\lambda_{BA}$. So we have $\lambda_{AB}|_j$ $>$ $\lambda_{BA}|_j$. Finally we can say $\lambda'_{AB}|_j$ $>$ $\lambda'_{BA}|_j$, so $\lambda'_{AB}$ $>$ $\lambda'_{BA}$. When $i<j$ in that case when $tail$ robot move upward  then in the new $\mathcal{SER}$ we can calculate that in this case also $\lambda'_{AB}$ and $\lambda'_{BA}$.

So in all the cases $\lambda'_{AB}$ $>$ $\lambda'_{BA}$. Now we show that the binary string of $AB$ is larger than $C'D'$. As we know that the binary string of $AB$ is larger than $CD$, but when the tail robot moves one step upward in that case as there is no robot other than the tail in $C'D'$ so if we calculate binary string it will be $\lambda'_{AB}$ $>$ $\lambda'_{C'D'}$.\\

In this case, $ABC'D'$ is a non-square grid, so four binary strings to consider here. By calculating we can say that $AB$ is the largest binary string in this new smallest enclosing rectangle. So the new configuration is still asymmetric. So the coordinate system is unchanged. As the tail moves upward so the $x$-coordinate remains unchanged.
    
\end{proof}
\paragraph{Stage-2} \textit{Input:} $\{P_4 \land \neg P_{14} \land P_5 \land P_6 \land \neg P_7\}$ is true.\\
The $head$ robot will move to origin. So $head$ robot will move towards left if it is not initially at the origin. In this move $head$ robot will remain $head$. \\
\textit{Aim:} After finite number of moves by the $head$ robot stage-2 completes when $\{P_4 \land \neg P_{14} \land P_5 \land P_6 \land P_7\}$ is true.
\begin{theorem}
    If we have an asymmetric configuration $\mathcal{C}$ in stage~2 at some time $t$, then
    \begin{enumerate}
        \item after one move by the head leftwards, the new configuration is still asymmetric and the coordinate system is unchanged.
        \item after finite number of moves by the head, $P_7$ becomes true, and Stage~2 terminates with $\{P_4 \land \neg P_{14} \land P_5 \land P_6 \land P_7\}$ true.
    \end{enumerate}
\end{theorem}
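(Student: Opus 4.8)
The plan is to obtain claim~2 from claim~1 by a straightforward induction on the number of leftward steps the head has already made, so that the whole content lies in claim~1: one leftward step of the head keeps the configuration asymmetric and leaves the agreed global coordinate system unchanged, and as a by-product keeps $P_5$, $P_6$ and $\neg P_{14}$ true. Fix notation as in the global-agreement phase: $s.rect(\mathcal{C})=ABCD$ with leading corner $A$ the origin and the $x$-axis along $AB$, so $A=(0,0)$, $B=(n,0)$, $D=(0,m)$, $C=(n,m)$; by $P_5$ we have $m\ge n+2$, so $ABCD$ is a non-square rectangle and only the four strings $\lambda_{AB},\lambda_{BA},\lambda_{CD},\lambda_{DC}$ matter. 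The head $\mathcal{H}$ is the leftmost robot of the bottom row, say $\mathcal{H}=(x_h,0)$, and $\neg P_7$ gives $x_h\ge 1$. Since the left side $AD$ must carry a robot, there is a robot at $(0,y_0)$, and $y_0\ge 1$ (otherwise $(0,0)$ would be occupied and would itself be the head). That robot never moves during Stage~2, so along the whole stage the left side of $s.rect$ stays on the line $x=0$, while the head stays in the bottom row with the cell immediately to its left always empty. Hence each step is collision-free, leaves $ABCD$ and thus $m,n$, and therefore also $M,N,V$, unchanged (so $P_5\land P_6$ persists) and creates no multiplicity point (so $\neg P_{14}$ persists); and each step decreases $x_h$ by one, so after $x_h$ steps $\mathcal{H}=(0,0)$, i.e. $P_7$ holds and Stage~2 terminates. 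Granting claim~1 at every step, this is claim~2.

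For claim~1 I would work directly with the binary strings. Write $\lambda_{AB}=0^{x_h}\,1\,c_{x_h+2}\cdots$, the leading $1$ being $\mathcal{H}$ at position $x_h+1$; the step turns the bottom-row block of $\lambda_{AB}$ into $0^{x_h-1}\,1\,0\,c_{x_h+2}\cdots$ and alters no later row, so the new string $\lambda'_{AB}$ satisfies $\lambda'_{AB}>\lambda_{AB}$, the first difference falling at position $x_h$, where the bit changes from $0$ to $1$. It remains to check that $\lambda'_{AB}$ still strictly beats the updated versions of the other three strings. For $\lambda_{CD}$ and $\lambda_{DC}$ the bottom row is scanned last, so the head's displacement perturbs only positions far beyond $x_h$ (here one uses $x_h\le n$); since every old string $\mu\le\lambda_{AB}=0^{x_h}1\cdots$ must begin with $0^{x_h}$, the updated $\mu'$ still begins with $0^{x_h}$, while $\lambda'_{AB}$ begins with $0^{x_h-1}1$, so $\lambda'_{AB}>\mu'$. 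For $\lambda_{BA}$ the bottom row is the first block, scanned right to left: the $1$ of $\mathcal{H}$ moves from position $n-x_h+1$ to position $n-x_h+2$, and the vacated cell flips position $n-x_h+1$ from $1$ to $0$, so $\lambda'_{BA}<\lambda_{BA}<\lambda_{AB}<\lambda'_{AB}$. Therefore $\lambda'_{AB}$ is again the unique lexicographically largest string: the new configuration is asymmetric, the leading corner is still $A$, $s.rect$ is unchanged, hence the origin and the axes are unchanged, and $\mathcal{H}$ — still the leftmost robot of the bottom row, hence still the head — has moved exactly one unit in the $-x$ direction. This is claim~1, and iterating it $x_h$ times yields claim~2 with $\{P_4\land\neg P_{14}\land P_5\land P_6\land P_7\}$ true.

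The step I expect to be the real obstacle is the three-way comparison with $\lambda_{BA},\lambda_{CD},\lambda_{DC}$: one must track precisely which bit positions the head's displacement perturbs under each scanning order and verify that the comparison with $\lambda'_{AB}$ is already decided at position $x_h$ for $\lambda_{CD},\lambda_{DC}$, and that $\lambda_{BA}$ can only weakly decrease — which requires $1\le x_h\le n$ so that both the old and the new positions of $\mathcal{H}$ lie inside the bottom-row block of the scan $\lambda_{BA}$. Everything else — collision-freeness, invariance of $m,n,M,N,V$, preservation of $\neg P_{14}$, and finiteness — is immediate once one observes that throughout Stage~2 the head is the leftmost robot of the bottom row and the robot sitting on the left side $AD$ never moves.
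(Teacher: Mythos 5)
Your proof is correct and follows essentially the same route as the paper's: the head is the first $1$ in $\lambda_{AB}$, so one leftward step strictly increases $\lambda_{AB}$ and it remains the unique lexicographically largest string; in fact you supply the pieces the paper's one-line argument leaves implicit, namely the comparisons with the \emph{updated} strings $\lambda_{BA},\lambda_{CD},\lambda_{DC}$ and the preservation of $P_5$, $P_6$ and $\neg P_{14}$. The only point you leave unstated is why $s.rect(\mathcal{C})$ cannot shrink on the right when the head moves: this needs $x_h\le n-1$, which does hold because a head sitting at corner $B$ would make $\lambda_{BA}$ begin with a $1$ and thus exceed $\lambda_{AB}=0^{x_h}1\cdots$, contradicting that $A$ is the leading corner.
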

\begin{proof}
    Let $\mathcal{C}$ be the initial configuration and $ABCD$ be the smallest enclosing rectangle. As $\lambda_{AB}$ is the largest string so the position of head robot is the first 1 in the largest string. Let $i$ be the first position of 1 in $\lambda_{AB}$ string. So there is no 1 before $i^{th}$ position. Now when head robot moves left then the 1 will occur earlier than the $i^{th}$ position. So $\lambda_{AB}$ remains the lexicographically largest string upto head reaches origin. So 1) and 2) holds.
\end{proof}
 \begin{figure}[t!]
\centering
\includegraphics[width=.6\linewidth]{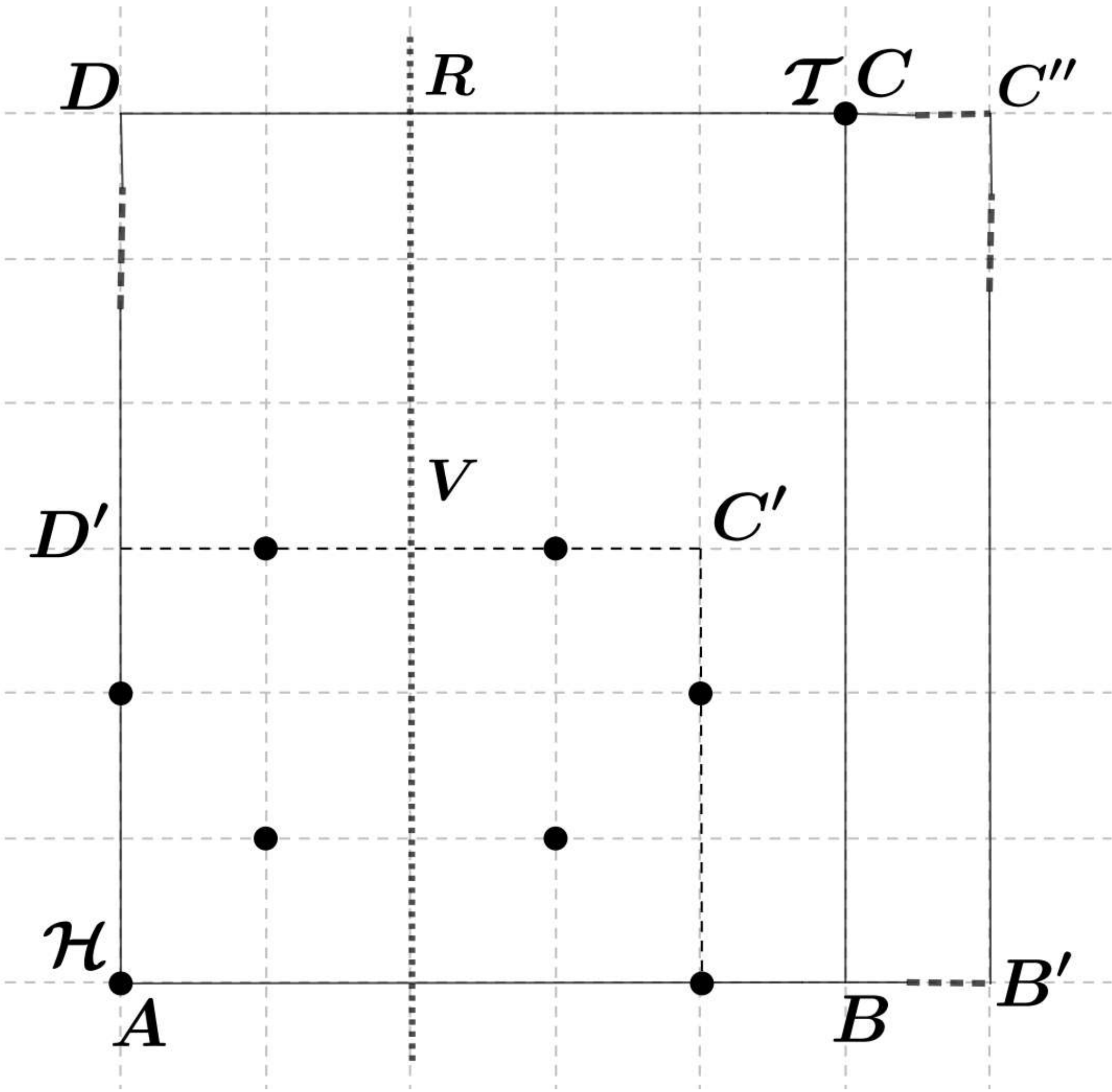}
     \caption{Case-1: $\mathcal{C'}$ has a vertical symmetry and $tail$ will move rightwards }
     \label{Fig:3a}
     \end{figure}
\begin{figure}[t!]
\centering
\includegraphics[width=.6\linewidth]{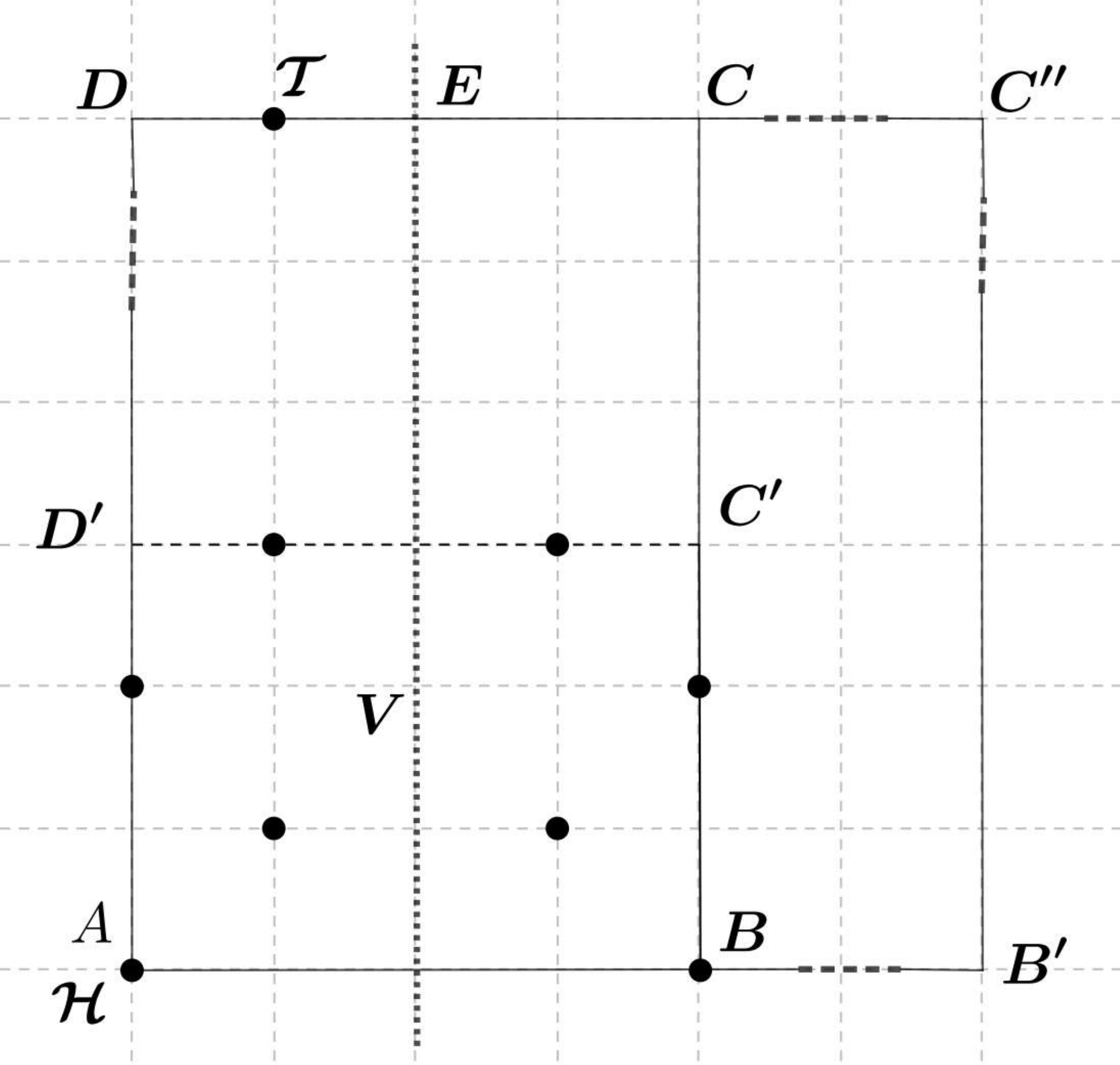}
    \caption{Case-2: $\mathcal{C'}$ has a vertical symmetry and $tail$ will move leftwards }
    \label{Fig:3.a}
\end{figure}

\paragraph{Stage-3} \textit{Input:} $\{P_4 \land \neg P_{14} \land P_5 \land P_6 \land P_7 \land \neg P_8\}$ is true.\\
The $tail$ robot will move rightwards.\\ 
\textit{Aim:} The main aim of this stage is to make $P_8$ true. After movement of robots $\{P_4 \land \neg P_{14} \land P_5 \land P_6 \land P_7 \land P_8\}$ is true.\\

\begin{theorem}
    If we have an asymmetric configuration $\mathcal{C}$ in stage~3 at some time $t$ with $P_{10}$ is false, then
    \begin{enumerate}
        \item after one move by the tail rightwards, the new configuration is still asymmetric and coordinate system is unchanged.
        \item after one move by the tail robot rightwards, $P_8$ becomes true.
        \item after finite number of moves by the tail, $\{P_4 \land \neg P_{14} \land P_5 \land P_6 \land P_7 \land P_8\}$ is true.
    \end{enumerate}
\end{theorem}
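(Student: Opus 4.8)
The plan is to prove claim~1 for a single rightward step of the tail and then iterate it; the iteration gives claims~2 and~3. On entry to Stage~3, $P_5$ says the smallest enclosing rectangle $ABCD$ is a genuine non-square rectangle with horizontal side $|AB|=n$ and vertical side $|AD|=m\ge n+2$, $P_7$ puts the head at the origin $A$, and $P_6$ forces the vertical span of $\mathcal{C}'=\mathcal{C}\setminus\{tail\}$ to be at most $m/2$; since Stage~1 has already pushed the tail up onto the top side $DC$, the tail is therefore the \emph{unique} robot on $DC$, strictly above every other robot. Also $\neg P_{10}$ says $\mathcal{C}'$ has no reflectional symmetry about a vertical line. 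Throughout the stage the set $\mathcal{C}'$ is untouched by the tail's horizontal moves, so $H$, the vertical span of $\mathcal{C}'$ and the truth value of $P_{10}$ are invariant, and $M,N,k$ are fixed target parameters.

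For claim~1, the asymmetry is argued from uniqueness of the topmost robot: after the tail steps right it is still the unique topmost robot, so any symmetry of the new configuration must fix its cell; a point reflection, or a reflection about a horizontal or diagonal axis through that cell, would carry another robot to a cell strictly higher than the tail, and a reflection about a vertical axis \emph{not} through the tail would carry the tail to a second robot at height $m$, both impossible; the only remaining possibility, a reflection about the vertical line through the tail, restricts to a vertical reflectional symmetry of $\mathcal{C}'$ and is excluded by $\neg P_{10}$. Hence $P_4$ is preserved. For the coordinate system I would argue as in the proofs of Stages~1 and~2: the leading corner is the one with lexicographically largest boundary string, and the tail's step changes only the $x$-coordinate of the single occupied cell of the top row, leaving the bottom row --- which carries the head at corner $A$, so $\lambda_{AB}$ begins with $1$ --- untouched; since $m\ge n+2$ the (possibly widened) rectangle stays non-square, so only $\lambda_{AB},\lambda_{BA},\lambda_{CD},\lambda_{DC}$ need be compared, and a case analysis on the tail's column relative to the middle of the top side --- the analogue of the $i=j$ / $i<j$ / $i>j$ split in the Stage~1 proof, with ``right'' in place of ``up'' --- shows the new string $\lambda'_{AB}$ stays the strict lexicographic maximum. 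So the origin, the $x$-axis along $AB$ and the $y$-axis along $AD$ are unchanged; since the tail moves only horizontally, $P_7$ and $P_6$ survive ($m$, $M$ and the vertical span of $\mathcal{C}'$ are untouched) and $\neg P_{14}$ holds because the lone top-row robot passes only through empty cells.

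For claims~2 and~3 I would track $n$. On entry $\neg P_8$ reads $n<\max\{N+1,H+1,k\}=:T$, a quantity fixed throughout the stage. As the tail moves right it first sweeps over the (at most $H$) columns occupied by $\mathcal{C}'$, during which the width is unchanged, and thereafter each step increases $n$ by exactly $1$; hence after finitely many steps $n$ first reaches $T$, i.e.\ $P_8$ becomes true, while by claim~1 the configuration is still asymmetric, has no multiplicity point, and keeps $P_7$ and $P_6$ true --- precisely the termination predicate of Stage~3 (claim~2 being the special case where on entry the tail is one column short of making $n=T$). One also checks that $P_5$ is still available when $P_8$ first holds, i.e.\ that the slack $m-2-n$ guaranteed by $P_5\wedge P_6$ on entry --- replenished, if necessary, by a return to Stage~1 pushing the tail further up as soon as $n$ would overtake $m-2$ --- absorbs the growth of $n$ up to $T$.

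The main obstacle is the lexicographic-maximality step inside claim~1: one must rule out that widening the rectangle ever lets $\lambda_{BA}$, or --- had the rectangle momentarily become square --- a rotated string, overtake $\lambda_{AB}$. This is precisely where the hypothesis $\neg P_{10}$ does its work, forbidding $\mathcal{C}$ from acquiring a vertical mirror symmetry (hence a tie between $\lambda_{AB}$ and $\lambda_{BA}$) as the tail crosses the central column, and where the guarantee $m\ge n+2$ is needed to keep the comparison to four strings. The accompanying bookkeeping linking the growth of $n$ with $P_5$ is routine but must be carried out, since without $m\ge n+2$ the string comparison itself would break.
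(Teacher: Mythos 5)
Your overall decomposition (establish invariance for a single rightward step, then iterate, letting a return to Stage~1 replenish $P_5$ when $n$ catches up with $m-2$) matches the paper's Case-1/Case-2 structure, and your bookkeeping for claims~2 and~3 against the fixed threshold $\max\{N+1,H+1,k\}$ is fine. The problem is the centerpiece of claim~1. You assert that, because the tail remains the unique topmost robot, ``any symmetry of the new configuration must fix its cell.'' That inference is valid only for symmetries that map the top row of the smallest enclosing rectangle to itself. A reflection about the horizontal mid-line of the rectangle, or a $180^{\circ}$ rotation about its centre, maps the top row to the bottom row; it sends the tail to a bottom-row cell and is a legitimate candidate symmetry (in the paper's string formalism it corresponds to $\lambda_{AB}=\lambda_{DC}$ or $\lambda_{AB}=\lambda_{CD}$). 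Your case analysis covers only symmetries through the tail's cell plus vertical reflections elsewhere, so these vertical-direction-reversing symmetries are never excluded, and $\neg P_{10}$ cannot exclude them for you, since $P_{10}$ speaks only of vertical reflections of $\mathcal{C}'$.

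This omission is not vacuous. Put the head at the corner $A=(0,0)$ as the only robot of row $0$, put every other robot of $\mathcal{C}'$ on the line $y=m/2$ (the maximal height $P_6$ permits), arranged symmetrically about the vertical mid-line of the rectangle with occupants at both $x=0$ and $x=n$, and put the tail at $(n-1,m)$. One checks that $P_4,\neg P_{14},P_5,P_6,P_7,\neg P_8,\neg P_{10}$ all hold, yet a single rightward step lands the tail on the corner $C=(n,m)$ and the resulting configuration is invariant under the $180^{\circ}$ rotation about the rectangle's centre, i.e.\ $\lambda_{AB}=\lambda_{CD}$. So the claim cannot be closed from the stated hypotheses by your argument; you would need either an additional invariant inherited from the earlier stages or a direct string comparison that treats the case in which the tail reaches the corner diagonally opposite the head. (The paper's own proof only compares $\lambda'_{AB}$ with $\lambda'_{BA}$ and dismisses $\lambda'_{CD},\lambda'_{DC}$ in passing, so it does not close this case either, but that does not repair your argument.) A smaller point: diagonal reflections and $90^{\circ}$ rotations are excluded simply because $m\ge n+2$ keeps the rectangle non-square, not because they would ``carry another robot higher than the tail.''
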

\begin{proof}
    Let $ABCD$ is the smallest enclosing circle at time $t$, where $\lambda_{AB}$ is the largest string. After one move by the tail rightward there may arise two cases.
    
\paragraph*{Case-1:} Let the smallest enclosing circle remain unchanged after one move by the tail rightward. As in this phase, the head is in origin and the tail has moved until $P_5$ and $P_6$ is true, in the binary string of $CD$ or $DC$ is smaller than $AB$. Also in this phase, $P_{10}$ is not true. So we must have $AB$ larger string than $BA$, so finally we get $\lambda'_{AB} > \lambda'_{BA}$.

\paragraph*{Case-2:} Suppose now tail robot is at $C$, then by one move of tail the new $\mathcal{SER}$ is $APQD$, where $|AP|$= $(n+1)$. Now it is easy to check that as $m\ge n+2$ so $m>n+1$. So we get that $AD>AP$. This implies that the new configuration is still not square, so we have to consider here only four binary strings, and as earlier $\lambda_{AP}$ will be a larger string. So we can conclude that the configuration is still asymmetric and the coordinate system is not changed by one move of the tail. It is easy to check that $P_6$ and $P_7$ are true here but not $P_{12}$. After the movement of the tail, $P_5$ may become false, so we are in stage 1 then. Then the tail moves upwards and one upwards move still has $P_5 \land P_6 \land P_7 \land  P_8 \neg P_{12}$ is true.\\

 Note that $P_{10}$ is either true or false in stage~3 by a finite number of moves of the tail the configuration remains asymmetric.
 
\end{proof}
In this stage the robots will check if $P_{10}$ is true or false. If $P_{10}$ is true then there may arise two cases:\\
\paragraph*{Case-1} If $tail$ is in the rightwards of the vertical symmetric line they it will move and make $P_8$ true.
\paragraph*{Case-2} If the $tail$ robot is in the left direction of the vertical symmetric line then it will not move rightwards. $Tail$ will then move in leftwards upto one more step than $D$ (fig~\ref{Fig:3.a}). In this case the co-ordinate system will be changed.
\begin{theorem}
    If we have an asymmetric configuration $\mathcal{C}$ in stage~3, then after finite number of moves by the tail robot, the configuration remains asymmetric.
\end{theorem}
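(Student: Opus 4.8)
The plan is to case-split on the truth value of $P_{10}$, following the two branches the algorithm takes in Stage~3. If $P_{10}$ is false, nothing new is required: the preceding theorem already shows that a single rightward move of the tail keeps the configuration asymmetric and the coordinate system fixed while making $P_8$ true, so finitely many (in fact one) tail moves suffice and the configuration is asymmetric throughout. Hence all the work lies in the case $P_{10}$ true, where $\mathcal{C}'=\mathcal{C}\setminus\{tail\}$ has a non-trivial reflectional symmetry about a vertical line $\ell$ while $\mathcal{C}$ itself is asymmetric; in particular the tail is the symmetry-breaking robot and does not lie on $\ell$. I would then treat the two positions of the tail relative to $\ell$ separately.

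In Sub-case~1 the tail lies strictly to the right of $\ell$ and is moved rightwards. Here I would rerun the binary-string argument of the earlier Stage-3 proof: each rightward step either leaves the smallest enclosing rectangle unchanged or widens it by one column, the head stays at the origin, and $\lambda_{AB}$ remains the unique lexicographically largest string, so the configuration stays asymmetric with an unchanged coordinate system. In addition I would note that the tail only ever moves away from $\ell$, so it never lands on $\ell$; and $\mathcal{C}$ cannot acquire a symmetry about $\ell$ or about any other line, since a finite configuration cannot possess two distinct parallel reflection axes (and $\mathcal{C}'$ already owns $\ell$), so any symmetry of $\mathcal{C}$ would have to fix the tail, which is impossible. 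After finitely many such steps $P_8$ becomes true.

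Sub-case~2 — the tail lies to the left of $\ell$ and is moved leftwards to the column one step beyond the corner $D$, so that the coordinate system changes — is the main obstacle. I would handle the intermediate positions and the final position separately. For an intermediate position the tail again moves monotonically away from $\ell$, so it never reaches $\ell$; and if $\mathcal{C}'\cup\{tail\}$ were symmetric about some other line $\ell''$, then, because $\mathcal{C}'$ is symmetric about $\ell\neq\ell''$, the reflection $\sigma_{\ell''}$ would either fix the tail (forcing $\mathcal{C}'$ to have the two distinct parallel axes $\ell$ and $\ell''$, impossible) or send the tail onto another robot of $\mathcal{C}'$, which one rules out using the extremal position of the tail together with the $\ell$-symmetry of $\mathcal{C}'$. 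For the final position, the tail is the unique robot lying outside the old rectangle, hence the unique robot on the leftmost column of the new, wider enclosing rectangle, and it sits at a corner $D_{new}$ whose incident ``left'' edge carries only this one robot; since that rectangle is non-square (its height still exceeds its new width by at least one), no symmetry of the new configuration can fix $D_{new}$, while every symmetry would have to map the single-robot left edge to another edge carrying a single robot at the matching height, and chasing this through the $\ell$-symmetry of $\mathcal{C}'$ yields a contradiction; hence the new configuration is asymmetric. The genuinely delicate points I expect to fight with here are that the leading corner can jump to a different corner of the enlarged rectangle (so one cannot simply reuse ``$\lambda_{AB}$ stays largest'' and must re-identify the leading corner and check uniqueness of its string afresh), that $180^\circ$ rotational symmetry of the enlarged configuration must be excluded on its own, and that the corner of the new rectangle opposite $D_{new}$ could a priori also host a robot with the same extremal property — each of which is dispatched using that the tail is the unique robot in its column together with the vertical symmetry of $\mathcal{C}'$. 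Since in every case the tail makes only finitely many moves and the configuration is asymmetric at each of them, the theorem follows.
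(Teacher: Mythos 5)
The paper states this theorem without any proof at all, so your attempt is filling a genuine hole rather than paralleling an existing argument; your case structure ($\neg P_{10}$ delegated to the preceding theorem, $P_{10}$ split by the tail's side of the symmetry axis $\ell$, with a leftward escape past $D$ in the second sub-case) is exactly the one the paper's prose suggests. However, the proof as written has concrete gaps. First, your symmetry exclusion really only handles vertical reflection axes: the observation that a finite configuration cannot have two distinct parallel reflection axes says nothing about a horizontal axis, a $180^\circ$ rotation, or --- if the enlarged rectangle happens to become a square --- diagonal axes and $90^\circ$ rotations, all of which must be excluded for the configuration to be ``asymmetric'' in the paper's sense (unique lexicographically largest corner string). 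Relatedly, the claim that ``any symmetry of $\mathcal{C}$ would have to fix the tail'' is asserted, not proved: a symmetry could exchange the tail with another robot of $\mathcal{C}'$; ruling this out needs the fact that the tail is the unique robot in the extremal row(s) of the enclosing rectangle, and the top--bottom swapping symmetries (horizontal reflection, $180^\circ$ rotation), which would send the tail onto the $x$-axis, still have to be killed separately.

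Second, in Sub-case~1 you propose to ``rerun the binary-string argument of the earlier Stage-3 proof,'' but that proof explicitly invokes $\neg P_{10}$ to conclude $\lambda_{AB}>\lambda_{BA}$. When $P_{10}$ holds, every row of $\mathcal{C}'$ is a palindrome, so $\lambda_{AB}$ and $\lambda_{BA}$ agree on all of $\mathcal{C}'$ and the comparison is decided entirely by the position of the tail's column versus its mirror column in the top row; whether a rightward move of a tail lying to the right of $\ell$ keeps $A$ (rather than $B$) as the leading corner is precisely the delicate point, and it is not settled by citing the $\neg P_{10}$ computation. Finally, in Sub-case~2 you correctly identify the three dangerous phenomena (the leading corner migrating to a different corner of the enlarged rectangle, $180^\circ$ rotational symmetry, and a competing extremal robot at the opposite corner) but only promise that they are ``dispatched''; since these constitute essentially the entire content of the hard case, the argument there is a plan rather than a proof. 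The skeleton is right, but each of these points needs to be carried out explicitly before the theorem can be considered established.
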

\paragraph{Stage-4} \textit{Input:}$\{P_4 \land \neg P_{14} \land P_5 \land P_6 \land P_7 \land P_8 \land \neg P_9\}$ is true.\\
The inner robots will move to the x-axis and form a line. Other than the $tail$ robot, all other inner robots form a line on the $x$-axis. In stage four, the $head$ is in origin, and all the robots on the $x$-axis first make the line compact, i.e. there is no empty grid point between two robots. After the $x$-axis becomes compact, when a robot $r_i$ is on $H_i$ and there are no robots in between $H_i$ and the $x$-axis and the right part of $r_i$ is empty in its horizontal line, then the robot moves to the $x$-axis. This procedure is done one by one by robots. In between this movement, no collision will occur. So all the inner robots other than the $tail$ form a line on the $x$-axis.\\
\textit{Aim:} When all the inner robots move to x-axis then $\{P_4 \land \neg P_{14} \land P_5 \land P_6 \land P_7 \land P_8 \land P_9\}$ is true.

\begin{theorem}
    If we have an asymmetric configuration $\mathcal{C}$ at some time $t$, with $\{P_4 \land \neg P_{14} \land P_5 \land P_6 \land P_7 \land P_8 \land \neg P_9\}$ is true, then
    \begin{enumerate}
        \item after any move by the inner robots the configuration is asymmetric and coordinate system is unchanged.
        \item after finite number of moves by the inner robots $P_9$ becomes true and stage~4 terminates with with $\{P_4 \land \neg P_{14} \land P_5 \land P_6 \land P_7 \land P_8 \land P_9\}$ is true.
        \end{enumerate}
\end{theorem}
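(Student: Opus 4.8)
The plan is to prove both parts by induction over the individual moves performed in stage~4. By the stage description there are only two kinds of moves: a \emph{compacting} move, in which a robot on the $x$-axis that has an empty cell immediately to its left and an occupied cell somewhere farther left steps one unit to the left; and a \emph{descending} move, in which the rightmost robot of a horizontal line $H_i$ with no robot strictly below it on the way to the $x$-axis steps one unit down. For part~1 I would carry along four invariants, all evidently true when stage~4 is entered and which I would show are preserved by every single move of either kind: (i) no cell carries two robots, so no multiplicity point exists ($\neg P_{14}$); (ii) the head stays at the origin, so $P_7$ holds and the left edge $x=0$ and bottom edge $y=0$ of the smallest enclosing rectangle are pinned; (iii) the tail is the unique robot of the topmost row and never moves in stage~4, so the top edge $y=m$ is pinned, whence $m$ — and with it $P_5$, $P_6$, and the quantities in $P_8$ — stays untouched; and (iv) the configuration remains asymmetric ($P_4$). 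Invariant~(i) is collision-freeness: every descending move is strictly downward into a cell below the mover, every compacting move is strictly leftward into a gap, and a robot reaches the $x$-axis only while the $x$-axis is currently gap-free, and from this one checks that two concurrently moving robots always aim at distinct cells and no robot aims at an occupied cell; the argument uses only the unit length and fixed direction of the moves, so it withstands \textsc{ASync}. Invariants~(ii) and~(iii) hold because in stage~4 the head and the tail are static while every inner robot moves only within the rows $0,\dots,m-1$. Granting (i)--(iv), the left, bottom and top edges of the rectangle are fixed and the configuration is asymmetric, so — by the same kind of binary-string comparison used in the earlier stages — $\lambda_{AB}$ is still the unique lexicographically largest string, the leading corner is still $A$, and the global coordinate system is unchanged.

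The delicate invariant is (iv): I must show that a single compacting or descending move cannot turn an asymmetric configuration into a symmetric one. Suppose such a move produced a configuration $\mathcal D$ with a non-trivial symmetry $\sigma$ (a reflection about a grid line or a rotation); then $\sigma$ maps the smallest enclosing rectangle of $\mathcal D$ onto itself, so it fixes, swaps or cyclically permutes its sides. By~(iii) the topmost side of that rectangle carries only the tail, whereas by~(i)--(ii) the bottom side, the $x$-axis, carries the head together with the inner robots that have already descended, hence at least two robots. A $90^\circ$ rotation would force all four boundary rows and columns to carry the same single number of robots; a $180^\circ$ rotation, or a diagonal reflection, would map the multi-robot bottom row onto the one-robot topmost row (respectively onto a side column); a horizontal reflection would place robots above the tail. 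The only remaining possibility is the reflection about the vertical midline of the rectangle; then the single top-row robot must lie on that midline, so the midline is $x=x_{tail}$, and the $x$-axis robots, forming a gap-free block anchored at the origin by~(ii), must be symmetric about $x=x_{tail}$, i.e.\ the tail must stand exactly over the centre of that block. I would rule this out quantitatively: from $P_5\wedge P_6$ one has $m=\max\{N,n\}+2=2\max\{M,V\}$ and from $P_8$ one has $n\ge\max\{N+1,H+1,k\}$, stage~3 has already taken care to keep the tail off the vertical symmetry line of $\mathcal C'$, and the right endpoint of the $x$-axis block never exceeds the total number of robots minus two; combining these forces the tail never to stand over the centre of the block throughout the stage, a contradiction. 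This establishes (iv) and completes part~1.

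For part~2 I would use a lexicographic potential. To a stage-4 configuration $\mathcal D$ assign the triple $\Phi(\mathcal D)=\big(N_{\uparrow}(\mathcal D),\,\Sigma_{\uparrow}(\mathcal D),\,h(\mathcal D)\big)$, where $N_{\uparrow}$ is the number of inner robots not on the $x$-axis, $\Sigma_{\uparrow}$ is the sum of their heights, and $h$ is the sum of the abscissae of the robots lying on the $x$-axis. Because no inner robot ever moves upward: a descending move that arrives on the $x$-axis strictly decreases $N_{\uparrow}$; a descending move that does not yet arrive keeps $N_{\uparrow}$ and strictly decreases $\Sigma_{\uparrow}$; and a compacting move keeps both $N_{\uparrow}$ and $\Sigma_{\uparrow}$ and strictly decreases $h$. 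Hence $\Phi$ strictly decreases in the lexicographic order at every move and, being a triple of non-negative integers, cannot decrease indefinitely; by fairness of the asynchronous scheduler the execution reaches a configuration in which no move is enabled. There $N_{\uparrow}=0$ and the $x$-axis carries no internal gap, i.e.\ the head and all inner robots occupy a gap-free block of the $x$-axis containing the origin — which is exactly $P_9$. Together with part~1, which keeps $P_4,\neg P_{14},P_5,P_6,P_7,P_8$ true throughout, this shows that stage~4 terminates with $\{P_4 \land \neg P_{14} \land P_5 \land P_6 \land P_7 \land P_8 \land P_9\}$ true.

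I expect the real obstacle to be invariant~(iv) for the \emph{transient} configurations created by \textsc{ASync}: the $x$-axis line may be only partially compacted and one or more inner robots may be caught in the middle of a descent, and one has to verify that a single step can never drop a robot onto a prospective axis or centre of symmetry; this is precisely where the combination of the off-centre placement of the tail from stage~3, the size inequalities $P_5$--$P_8$ inherited from stages~1--3, and the right-to-left order in which inner robots leave their lines must all be used together. The accompanying collision analysis under \textsc{ASync} is routine, though not entirely trivial, once one notes that all non-$x$-axis motion is strictly downward, all $x$-axis motion is strictly leftward into a gap, and a descent reaches the $x$-axis only when the latter is gap-free.
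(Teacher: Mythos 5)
Your route is genuinely different from the paper's. The paper's proof of this theorem is a short monotonicity argument on the associated binary string: with the head pinned at the origin, every leftward/downward move of an inner robot only makes $\lambda_{AB}$ lexicographically larger, so $A$ remains the leading corner, the configuration stays asymmetric and the coordinate system is preserved; collision-freedom is asserted from the right-to-left, one-line-at-a-time sequencing, and termination is asserted without an explicit measure. You instead prove asymmetry by exhaustive case analysis of the candidate symmetries of the enclosing rectangle and prove termination with a lexicographic potential $\Phi=(N_{\uparrow},\Sigma_{\uparrow},h)$. Your termination argument is strictly more rigorous than the paper's; your asymmetry argument is more laborious than the string argument but makes explicit what the paper leaves implicit. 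Note also that by $P_5$ the rectangle satisfies $m\ge n+2>n$ throughout the stage, so it is never a square and your $90^\circ$-rotation and diagonal-reflection cases are vacuous; and by $P_8$ the tail is the \emph{unique} robot of the rightmost column as well as of the top row, hence sits at the corner $(n,m)$, which disposes of the vertical-midline reflection in one line --- no quantitative argument about the tail standing over the centre of the $x$-axis block is needed.

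There is one concrete gap in your case analysis: you dispose of the $180^\circ$ rotation (and the horizontal reflection) by asserting that the bottom row ``carries the head together with the inner robots that have already descended, hence at least two robots.'' At the start of stage~4 it is possible that no inner robot lies on the $x$-axis, so the bottom row, like the top row, contains exactly one robot and this counting argument does not close the case. The fix is to use the invariants you already carry: a $180^\circ$ rotation must send the tail at $(n,m)$ to the head at $(0,0)$ and hence exchange the open upper and lower halves of the rectangle, but by $P_6$ every non-tail robot lies in the lower half, so the two halves cannot carry equal numbers of robots once there are at least two inner robots (guaranteed since $|\mathcal{T}_g|>2$); alternatively, fall back on the paper's string comparison $\lambda_{AB}>\lambda_{CD}$. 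A second, smaller bookkeeping point: $P_6$ as defined refers to $V$, the \emph{current} vertical extent of $\mathcal{C}'$, which shrinks as inner robots descend, so your claim that the quantities in $P_5,P_6,P_8$ ``stay untouched'' is literally false for $P_6$; you need the convention (which the paper also tacitly uses) that $V$ is measured at entry to the stage, or you need to restate $P_6$ as an inequality.
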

\begin{proof}
    In stage~4, head robot is in origin and tail robot is at a position satisfying $P_5 \land P_6 \land P_8$ is true. In this scenario, when robots on x axis move left then the string associated to $AB$ becomes larger than the previous one. Also in this stage one by one horizontal line's robots will move to the x axis. The movement of robots on a horizontal line will be from right to left. So in this movement by the inner robots, coordinate system will not change and as the movement is sequential so no collision will occur by the movement of inner robots. By finite moves by the robots all inner robots are on x axis without the tail robot. Then stage~4 terminates.
\end{proof}

\paragraph{Stage-5} \textit{Input:} $P_4 \land \neg P_{14} \land \neg P_{13} \land P_7 \land P_9$ is true.\\
One robot which is not on a compact line will move to its closest end point of that line following the shortest path if it $ \in \mathcal{T}_g$. As without one robot all other robots are on line so that one robot will move downwards. Then $P_{14}$ is true. Or when $(P_4 \land \neg P_{14} \land P_7 \land P_9)\land (P_5\land P_6\land P_8)$ is true then that one robot will move upward from its position if it $\in \mathcal{T}_{Apf}$. By this move $P_{13}$ will be true.\\

 \begin{figure}[t!]
\centering
\includegraphics[width=.75\linewidth]{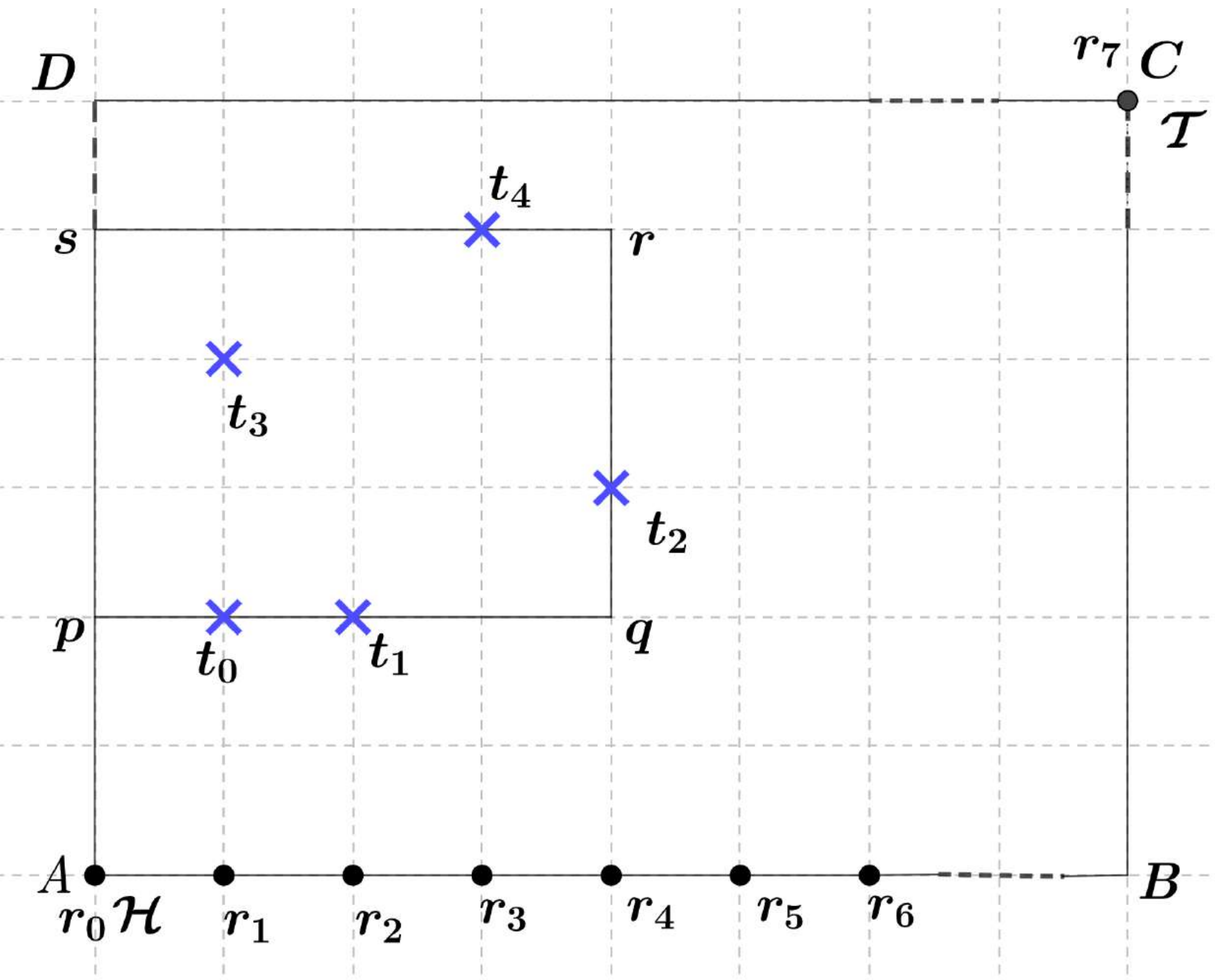}
     \caption{Any $r_i \in \mathcal{T}_g$ moves to $A$ and $r_i \in \mathcal{T}_{Apf}$ moves to fixed target positions one by one.}
     \label{Fig:3}
     \end{figure}
\begin{figure}[t!]
\centering
\includegraphics[width=.75\linewidth]{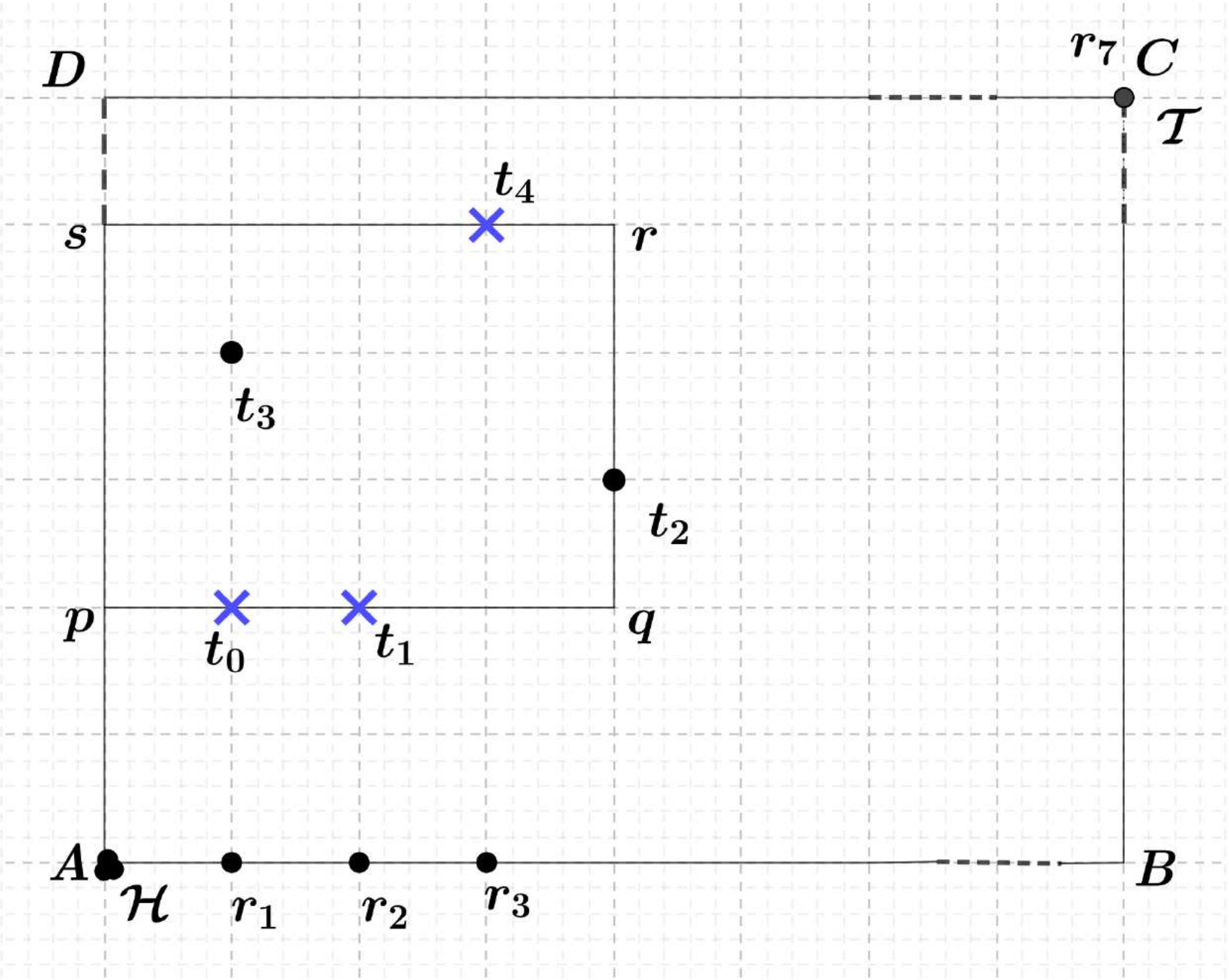}
    \caption{A multiplicity point at $A$ and some of the $\mathcal{T}_{Apf}$ robots form the pattern }
    \label{fig4}
\end{figure}

 \textit{Aim:} $P_4 \land \neg P_{14} \land P_7 \land P_9 \land (P_{13}\lor P_{14})$ is true..\\
After this stage if $P_{14}$ is true then all robots will fix the multiplicity point as origin, the compact line is as x axis and the other perpendicular axis as y axis.
\begin{theorem}
    If we have an asymmetric configuration $\mathcal{C}$ in stage~5, then after movement of one robot either one multiplicity point will create or $P_{13}$ will be true.

\end{theorem}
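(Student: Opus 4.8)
The plan is to case on the team of the unique robot $r$ that does not lie on the $x$-axis. By $P_7\wedge P_9$, the head is at the origin and, with the inner robots, fills a compact segment $S=\{(0,0),(1,0),\dots,(L-1,0)\}$ of the $x$-axis with $L\ge 2$ robots; hence $r$ — being the only robot off the $x$-axis and so the topmost one — is exactly the $tail$, located at $(x_0,m-1)$ with $0\le x_0\le n-1$. The structural fact I would set up first and reuse throughout is this: for any configuration of the form $S\cup\{(a,b)\}$ with $b\ge 1$, a routine check of the grid isometries (each preserves $S$ and fixes the unique off-axis robot) shows that the only possible non-trivial symmetry is the reflection about the line $x=a$, which is an actual symmetry iff $a=(L-1)/2$. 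In particular $P_4$ forces $x_0\ne(L-1)/2$.

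Case $r\in\mathcal{T}_g$: the aim is that $r$ reaches an endpoint of $S$, whereupon a multiplicity point appears and $P_{14}$ becomes true. The distances from $r$ to $(0,0)$ and to $(L-1,0)$ are $x_0+(m-1)$ and $|x_0-(L-1)|+(m-1)$; by the structural fact ($x_0\ne(L-1)/2$, and $L\ge 2$) these are unequal, so the nearer endpoint $(c,0)$ is well defined and $c\in\{0,L-1\}$ lies on the same side of $(L-1)/2$ as $x_0$. I would route $r$ along the shortest path that moves horizontally at height $m-1$ to $(c,m-1)$ and then descends the column $x=c$ down to $(c,0)$. Every vertex before the last one has $y\ge 1$, so $r$ meets no robot of $S$ en route; and the abscissa of $r$ stays strictly on the $x_0$-side of $(L-1)/2$ the whole way, so by the structural fact each intermediate configuration is an asymmetric and multiplicity-free instance of $S\cup\{(a,b)\}$. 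The final step places $r$ on the occupied vertex $(c,0)$, creating the multiplicity point.

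Case $r\in\mathcal{T}_{Apf}$: this branch is entered with $P_5\wedge P_6\wedge P_8$ still in force (the state in which Stage~4 terminates), so $m=\max\{N,n\}+2$ and $m=2\max\{M,V\}$; since $2\max\{M,V\}$ is even, the only clause of $P_{13}$ that fails is the parity of $m$. Now $r$ takes one step upward. As $r$ is the unique topmost robot and a vertical move changes neither the width $n$, nor $\mathcal{C}'=\mathcal{C}\setminus\{r\}$ (hence not $V$), nor the target parameters $M,N$, the new height is $m+1$, and then $m+1=\max\{N,n\}+3\ge\max\{N,n\}+2$, $m+1=2\max\{M,V\}+1\ge 2\max\{M,V\}$, and $m+1$ is odd; that is, $P_{13}$ holds. (Asymmetry persists since $x_0$ is unchanged and still differs from $(L-1)/2$.)

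The delicate part is the $\mathcal{T}_g$ case: singling out the unique nearest endpoint and then certifying a single shortest route to it that is simultaneously collision-free and asymmetry-preserving. Both reduce to the one observation that the hypothesis $P_4$ keeps $r$ off the vertical symmetry axis of $S$; granted that, the explicit ``horizontal, then straight down'' route settles everything, and the $\mathcal{T}_{Apf}$ case is just the parity computation above once one records that $r$'s step leaves $S$, $M$ and $N$ untouched.
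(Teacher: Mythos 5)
Your proof is correct and follows essentially the same route as the paper's: it cases on the team of the unique off-axis robot (the $tail$), obtains $P_{13}$ from the parity flip of $m$ after one upward step when $P_5\land P_6$ hold, and obtains a multiplicity point by sending a $\mathcal{T}_g$ tail along a shortest path to the nearer endpoint of the compact line. You supply details the paper only asserts --- the symmetry classification of $S\cup\{(a,b)\}$ showing $P_4$ forces $x_0\neq (L-1)/2$ and hence a unique nearest endpoint, and an explicit collision-free, asymmetry-preserving route --- so your write-up is, if anything, more complete than the original.
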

   \begin{proof}
       After the stage~4, all the inner robots are on x axis. When the tail robot is in team $\mathcal{T}_{Apf}$ then it will move upward one step and makes $P_{13}$ true. Without the tail all other robots are on line so head robot will not change and the $\lambda_{AB}$ remains the lexicographically largest string. So the coordinate system will not change. Now when the tail robot is in team $\mathcal{T}_{g}$ then it will move downwards. In this case also, tail will move downwards to its nearest corner. As all other robot is in x axis, when tail robot move downwards, the coordinate system does not change untill it reaches the nearest corner robot. But when tail robot reaches its nearest corner robot, then one multiplicity point will create. Afterwards the multiplicity point will be treated as origin and the line where other robots are placed treated as x axis. Any line perpendicular to the line treated as y axis. So in this way by the move of the tail robot either $P_{14}$ that is one multiplicity point or $P_{13}$ will be true.
   \end{proof} 

\paragraph{Stage-6} \textit{Input:} $P_{13} \lor P_{14}$ is true. \\
When the $tail$ robot will see that $P_{13}$ is true then it will not move.
An inner robot when sees that it is the rightmost robot on x-axis and $P_{13}$ is true and there exists robots at the positions $t_{k-1},t_{k-2}.....t_{k-i+1}$ where $1\le i\le k$ (let $t_k$ be the position of $tail$) then that inner robot will move upwards to $t_{k-i}$ if it $\in \mathcal{T}_{Apf}$. If the robot $\in \mathcal{T}_g$ then it will move to the origin. Note that in this case robot can fix the global coordinate so it will move to the origin by choosing the first horizontal line in the positive direction of the y-axis. When a rightmost robot of team \textsc{Apf} on x-axis sees that $P_{13}$ is true and no robot on target positions then it will move to $t_{k-1}$. 
Note that when a left most robot without $head$ on x axis of \textsc{Apf} team sees that only $tail_{target}$ or $tail_{target}$ and $head_{target}$ are not occupied by robots then the robot will move to $tail_{target}$. 
Again if $P_{14}$ is true then the rightmost robot on x-axis sees that there exist robots on $t_{k},t_{k-1}.....t_{k-i+1}$ where $1\le i\le k$ then that inner robot moves upward to $t_{k-i}$ if it $\in \mathcal{T}_{Apf}$. If it $\in \mathcal{T}_g$ then move to multiplicity point. Here when a robot can see all robots on a line then it will fix that line as the x-axis and anyone perpendicular line as the y-axis and the multiplicity point as the origin. So the robot $\in \mathcal{T}_g$ moves to the origin by choosing the first horizontal line in the positive direction of y-axis. In this way, the inner robots move to their target positions one by one.\\
\textit{Aim:} After this stage $P_{14} \land P_{12}$ is true.\\
After this stage, the multiplicity will be the origin and as the target pattern will be formed with respect to $(0,2)$ so either there will be a robot at $tail_{target}$ or the $tail$ robot will be at a position maintaining $P_{13}$ true. So we can choose the larger side as the y-axis and the other one as the x-axis.

\begin{figure}[t!]
 \centering
     \includegraphics[width=\linewidth]{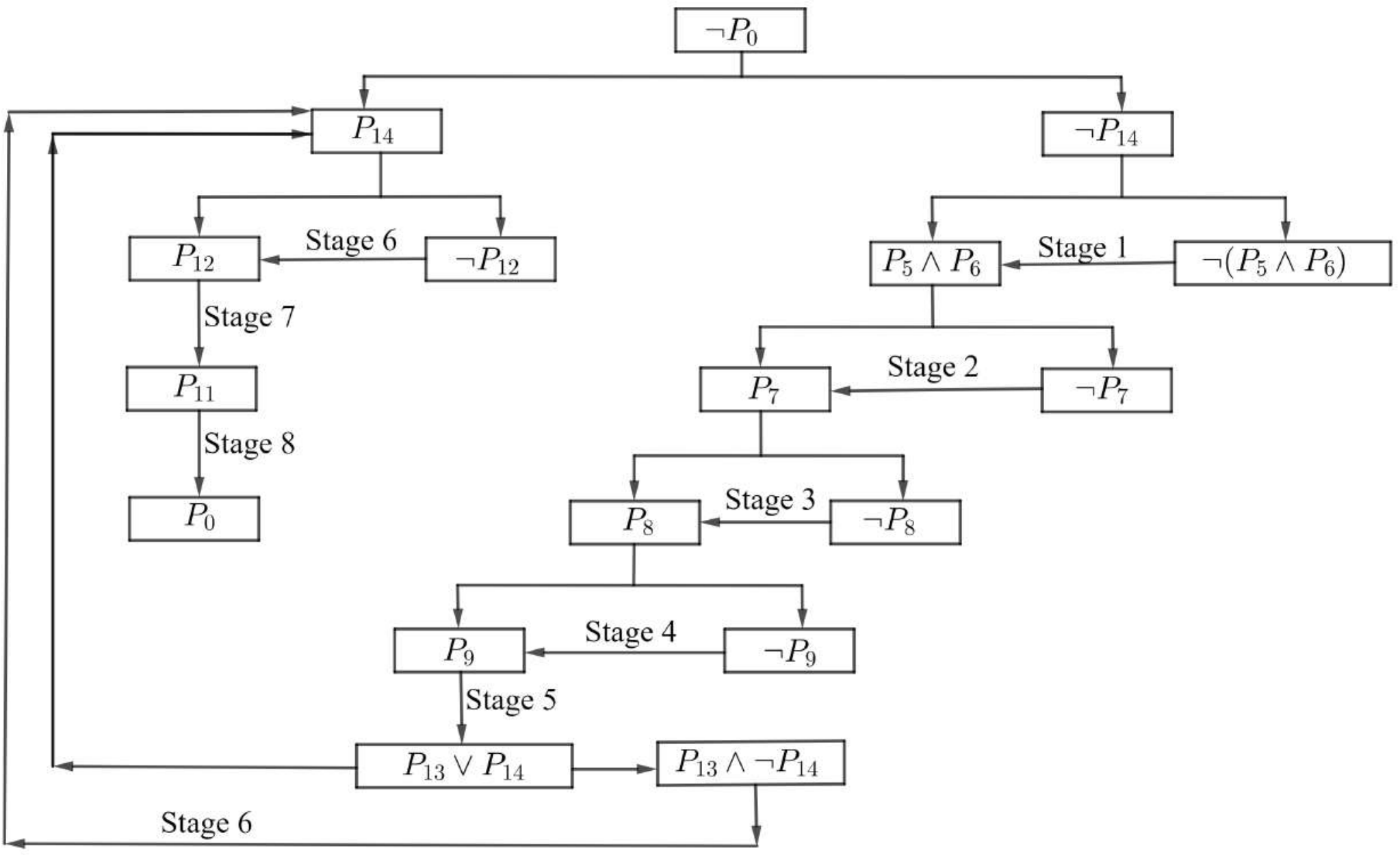}
     \caption{Flow chart of the algorithm}\label{flowchart1}
    \end{figure}

    \begin{theorem}
        If we have an asymmetric configuration $\mathcal{C}$ in stage~6 at time $t$, then after a finite number of moves by the inner robots stage~6 terminates and $P_{12}$ is true.
    \end{theorem}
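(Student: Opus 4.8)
The plan is to prove the statement by tracking a single monotone quantity — the number of inner robots that have already reached their designated destination (the origin for a robot of $\mathcal{T}_g$, and the next free target position in the prescribed snake order for a robot of $\mathcal{T}_{Apf}$) — and to show that this quantity strictly increases until it equals the total number of inner robots, at which point $\mathcal{C}''=\mathcal{C}''_{final}$, i.e. $P_{12}$ holds. Since in Stage~6 neither the $head$ nor the $tail$ is ever moved, $\mathcal{C}''$ changes only through moves of inner robots, so it suffices to control those.

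First I would establish the stage invariant: at every configuration reached during Stage~6 the global frame determined at the end of Stage~5 (the origin, the $x$-axis carrying the remaining inner robots, and the orientation fixed by the lexicographically largest string $\lambda_{AB}$) is unchanged. This follows exactly as in the theorems of Stages~2--5: the $head$ and $tail$ stay put, an inner robot leaves the $x$-axis only by moving upward towards a target position or by sliding along the line adjacent to the $x$-axis to the origin, so the line on the $x$-axis and the $\lambda$-orientation are preserved, and the only new multiplicity that can appear sits at the origin. Consequently every inner robot, whenever activated, reads the same coordinate system, hence the same ordered list of target positions $t_1,\dots,t_k$ and the same origin.

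Next I would argue that the moves are correctly serialized and the destinations consistently chosen. At any non-terminal configuration there is a unique rightmost inner robot still on the $x$-axis, and by the activation predicate only such a robot can start a move; a robot of $\mathcal{T}_g$ passes through to the origin without affecting the number of occupied target positions, so the occupied target positions always form a prefix $t_{k-1},t_{k-2},\dots$ of the snake order and each mover correctly picks the first free one (resp. the origin). Because a robot leaves the $x$-axis on its very first step, once it departs the next rightmost $x$-axis robot becomes eligible; this peeling discipline, together with the boustrophedon (\emph{snake}) indexing of the target positions, is what I would use to prove collision-freeness: a robot routed to $t_{k-i}$ goes up its column and then along the row of $t_{k-i}$ towards it, while a later mover routed to $t_{k-i-1}$ — or to the origin along the grid line adjacent to the $x$-axis in the direction of the $tail$ — is always ``behind'' the former one in the snake order and in cells not yet entered, so their trajectories are disjoint; I would also check separately that the corridor used by a $\mathcal{T}_g$ robot to reach the origin contains no already-placed target robot, which holds because the snake order fills the low rows last. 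This collision analysis — in particular ruling out a conflict between two robots simultaneously in transit above the $x$-axis under the asynchronous scheduler, where one does not know the team of the other — is the main obstacle and the part needing a careful case check on the path shapes.

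Finally, termination: there are finitely many inner robots, each travels a path whose length is bounded by the dimensions of $s.rect$, and once a robot reaches its destination its activation predicate becomes false forever after (its destination cell is occupied and it is no longer a rightmost $x$-axis robot), so it never moves again. Hence after a finite number of inner-robot moves every inner robot sits at its destination, which is precisely the restriction of $\mathcal{C}''_{final}$ to the inner robots; therefore $\mathcal{C}''=\mathcal{C}''_{final}$, so $P_{12}$ is true, and since $|\mathcal{T}_g|>2$ at least two robots of $\mathcal{T}_g$ now occupy the origin, so $P_{14}$ holds as well, matching the stated Aim of Stage~6.
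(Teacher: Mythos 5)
Your proposal follows essentially the same route as the paper's proof: the rightmost robot on the $x$-axis moves one at a time, a $\mathcal{T}_{Apf}$ robot fills the next free target position in the prescribed order while a $\mathcal{T}_g$ robot travels to the origin along the first horizontal line above the $x$-axis, the coordinate system is preserved throughout, and termination follows because finitely many inner robots each make finitely many moves. Two small notes: the corridor at height $1$ is free not because ``the snake order fills the low rows last'' but because the target is embedded with respect to $(0,2)$, so no target position lies on that line at all (this is exactly the fact the paper invokes); and the asynchronous two-robots-in-transit conflict you flag as still needing a case check is likewise only asserted away in the paper (``robots move one by one''), with the occupied-prefix condition in the activation predicate doing the serializing work for $\mathcal{T}_{Apf}$ movers but not obviously for a $\mathcal{T}_g$ mover still en route to the origin.
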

    
 \begin{proof}
      In this stage, all the inner robots on the line x-axis will move to its destination point which will be either on the target position or the origin. Let, $P_{13}$ is true. In that case, the tail robot will be at a position maintaining the $P_{13}$ is true. So the inner robots on the x axis will move one by one. As $P_{13}$ is true, so the rightmost robot on x axis first moves either at $t_{k-1}$ if it is $\in \mathcal{T}_{Apf}$ or it will move to origin. The robot will move to the origin choosing the first horizontal line. As without tail, all robots are on x axis, and the target embedding is with respect to $(0,2)$ position. So there will be no robot on the first horizontal line above x axis. So the robots of x axis from right to left will move to the position, either at $t_{k-1}, t_{k-2}.....$ or move to origin. So within this sequential movements of robots, no collision will occur and configuration remain asymmetric. Now when $P_{14}$ is true. In that case, the multiplicity point will be origin and the robots on a line will be x axis. The perpendicular any direction of x axis will be y axis. So the rightmost robot on x axis will move $t_{k}$ if it is $\in \mathcal{T}_{Apf}$ or move to the origin. Here also, robots move one by one. So no collision and symmetric configuration will occur. If $P_{14}$ true and head and tail both are $\in \mathcal{T}_{g}$, then by this stage $P_{0}$ will be true. Also when tail is in $\mathcal{T}_{g}$ but head is in $\mathcal{T}_{Apf}$, then by this stage $P_{11}$ will be true and then it will be in stage~8. But if head and tail are in $\mathcal{T}_{Apf}$ and $P_{13}$, then by the movement of inner robots, $P_{12}$ will be true and stage~6 terminates.
 \end{proof}
    
\paragraph{Stage-7} \textit{Input:}In this stage $P_{14} \land P_{12} \land \neg P_{11}$ is true.\\
If $tail \in \mathcal{T}_{Apf}$ then move to the position of the $tail_{target}$. But if $tail \in \mathcal{T}_g$ then the $tail$ will move to the origin. If $tail \in \mathcal{T}_g$ then the $tail$ will move downwards up to the x-axis and then move towards the multiplicity point (By $P_{13}$ condition we can say that there will be no robot in this path of $tail$'s movement). \\
\textit{Aim:} $P_{11}$ is true.

\begin{theorem}
If we have an asymmetric configuration $\mathcal{C}$ in stage~7 at time $t$ then after the finite number of moves stage~7 terminates with $P_{11}$ is true.
\end{theorem}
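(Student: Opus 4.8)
\medskip
\noindent\textbf{Proof plan.} The plan is to follow the $tail$ robot through stage~7 and to establish, in order: (i) the $tail$ has a well-defined destination that is never recomputed to a different vertex; (ii) every move the $tail$ makes is collision-free, leaves the configuration readable with the same global frame, and strictly shortens the route to that destination; and (iii) the instant the $tail$ reaches that destination, $P_{11}$ holds. Note first that in stage~7 no robot other than the $tail$ has a move rule, so the whole stage is just the $tail$'s journey, and (ii)--(iii) are exactly what must be proved.

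I would begin by recording the state at entry. Since $P_{12}\wedge\neg P_{11}$ holds, every robot other than the $head$ and the $tail$ already occupies its assigned final vertex -- the $\mathcal{T}_g$ inner robots stacked on the multiplicity point and the $\mathcal{T}_{Apf}$ inner robots on the target vertices $t_2,\dots,t_{k-1}$ -- so the $tail$ is the unique displaced robot besides possibly the $head$, and in particular the $x$-axis now carries only the multiplicity point. Because $P_{14}$ holds there is exactly one multiplicity point, and by the convention fixed at the end of stage~6 every robot reads it as the origin, orients the $y$-axis along the longer side of the current smallest enclosing rectangle, and thereby agrees on the embedding of $\mathcal{C}_{target}$ with respect to $(0,2)$ and on the vertex $t_k=tail_{target}$. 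The crucial point is that none of the data defining this frame -- the position of the multiplicity point, the frozen positions of the inner robots, and the comparison ``the vertical extent exceeds the horizontal extent'', which stays valid throughout ($P_{13}$ guarantees it while the $tail$ is still high, and it persists once the $tail$ has descended onto $t_k$ because the target's vertical extent is at least its horizontal extent) -- is affected by the $tail$'s motion; hence the frame is invariant for stage~7 and the destination computed for the $tail$ never changes. This invariance is the point that needs the most care, and I return to it below.

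Next I would split on the team of the $tail$ (only the $tail$ itself uses this, and since all other robots are quiescent their inaction is trivially consistent). If $tail\in\mathcal{T}_{Apf}$, its destination is $t_k$; having been lifted in stage~5 it stands strictly above the whole embedded pattern, so the route ``travel horizontally along its own empty row to the column of $t_k$, then descend that column to $t_k$'' meets the rest of the configuration only at $t_k$ itself, because $t_k$ is the last vertex of the snake ordering and nothing sits above it in its column. If $tail\in\mathcal{T}_g$, its destination is the origin, and the route ``descend to the $x$-axis, then move along the $x$-axis toward the multiplicity point'' is free of robots by the clearance built into $P_{13}$. In neither case does a new multiplicity point appear -- the $\mathcal{T}_{Apf}$ $tail$ settles on the empty vertex $t_k$, while the $\mathcal{T}_g$ $tail$ merely reinforces the existing multiplicity -- so the origin convention, and with it the whole frame, is preserved, which settles the frame-stability part of (ii). Moreover the Manhattan distance from the $tail$ to its destination drops by one at each activation, and the adversarial scheduler can only postpone activations, never revert a completed move; hence the $tail$ performs finitely many moves and then remains parked at its final vertex.

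Once the $tail$ has stopped, the only robot that can still fail to be on its final vertex is the $head$; that is, $\mathcal{C}\setminus\{head\}=\mathcal{C}'''_{final}$, which is precisely $P_{11}$, so stage~7 terminates with $P_{11}$ true. The delicate step, as flagged, is the frame invariance while the $tail$ travels: one must verify that weak multiplicity detection still singles out the multiplicity point as origin, that the $tail$ stays unambiguously identifiable as the lone displaced robot, and -- should $\mathcal{C}_{target}$ admit a reflective symmetry about the vertical line through the origin -- that the gathered $\mathcal{T}_g$ robots together with the still-displaced $tail$ break that symmetry, so that the orientation, and hence the vertex $t_k$, never flips. I would discharge this exactly as in the earlier stages, using the size and parity guarantees packaged into $P_{13}$ (installed precisely to keep the configuration rigid) together with the fact that the $tail$ remains strictly off the target vertex set until its very last move.
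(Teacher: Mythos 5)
Your proposal is correct and follows essentially the same route as the paper's own proof: a case split on the team of the $tail$, the same L-shaped paths (horizontal-then-down to $tail_{target}$ for $\mathcal{T}_{Apf}$, down-then-along-the-$x$-axis to the origin for $\mathcal{T}_g$), collision-freeness from $P_{12}$ and $P_{14}$, and $P_{11}$ upon arrival. Your added discussion of frame invariance and the Manhattan-distance termination argument only makes explicit what the paper leaves implicit.
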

\begin{proof}
    When in stage~7, without head and tail robots, all the other robots are in target positions. So if $tail \in \mathcal{T}_{Apf}$ then there is no robot on the position of $tail_{target}$. The tail robot will first moves left up to the x-coordinate of $tail_{target}$ and then move down towards $tail_{target}$. In this move, as $P_{14}$ already true, so nu symmetry will occur and only one robot will move in this stage. So collision will not occur also. When the tail robot reaches the $tail_{target}$, then the $P_{11}$ will be true.  But when the tail robot is in team gather and $P_{12}$ is true, then all inner robots form the pattern with respect to $(0,2)$. So there is no robot on the x axis. Tail robot first move downwards and then move left through the x axis towards the origin. In the path of tail robots move, there is no other robot will be present as $P_{14}$ and $P_{12}$ are true. So after the movement of tail robot, $P_{11}$ will be true and stage~7 terminates.
\end{proof}
\paragraph{Stage-8} \textit{Input:} $P_{11} \land P_{14}$ is true.\\
If $head$ robot is in gather team then it will not move. So then $P_0$ is true. But if it belongs to $\mathcal{T}_{Apf}$ then it moves to upward and then to the position of $head_{target}$. \\
\textit{Aim:} After the movement of $head$ then $P_0$ is true.\\

 \begin{theorem}
     If we have an asymmetric configuration $\mathcal{C}$ in stage~8 at time $t$ then by the move of head robot, stage~8 terminates and $P_0$ is true.
 \end{theorem}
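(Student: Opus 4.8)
The plan is to read $P_0$ off the Stage~8 entry invariant $P_{11}\wedge P_{14}$ together with the head's location inherited from the earlier stages, splitting on the team of the head. First I would unpack the hypotheses. By $P_{14}$ there is a multiplicity point, and by the bookkeeping of Stages~5--7 this point is the common gathering vertex of the $\mathcal T_g$ robots and has been fixed as the origin of the agreed coordinate system, so $head_{target}=(0,2)$ is well defined. By $P_{11}$, $\mathcal C\setminus\{head\}=\mathcal C_{final}\setminus\{head_{target}\}$, hence every non-head robot already occupies its final position and the cell $head_{target}$ is vacant; moreover, since the target's smallest enclosing rectangle has $(0,2)$ as its bottom-left corner (so every target cell has $x\ge0$ and $y\ge2$) and the gathering vertex is the origin, every cell with $x<0$ and every cell on the row $y=1$ is empty. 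Finally, by Stage~2 and the possible re-centring of Stage~5 the head sits on row~$0$ at a cell $(a,0)$ with $a\le0$.

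\emph{Case $head\in\mathcal T_g$.} The head is a gathering robot, so its position in $\mathcal C_{final}$ is the gathering vertex, i.e.\ the origin $(0,0)$, which is where it already is. Adjoining the head to $\mathcal C\setminus\{head\}$ therefore reproduces $\mathcal C_{final}$, so $P_0$ already holds on entry; consistently, the rule tells the head not to move, and Stage~8 terminates immediately.

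\emph{Case $head\in\mathcal T_{Apf}$.} Now $\mathcal C$ differs from $\mathcal C_{final}$ only in that the head sits at $(a,0)$ instead of at $head_{target}=(0,2)$, so it suffices to show that the Stage~8 rule drives the head to $(0,2)$ in finitely many steps without collision and without perturbing the coordinate agreement; then $\mathcal C=\mathcal C_{final}$, i.e.\ $P_0$, is immediate. The rule moves the head up to row~$2$ and then, if $a<0$, right along row~$2$ to $(0,2)$, so it visits the cells $(a,1),(a,2),(a+1,2),\dots,(0,2)$; by the first paragraph every one of these is empty --- the first lies on row~$1$, the others (apart from $(0,2)$) have $x<0$, and $(0,2)=head_{target}$ is vacant by $P_{11}$. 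During the walk the multiplicity point is untouched --- $|\mathcal T_g|>2$ keeps it a multiplicity even if the head had shared that cell --- so $P_{14}$ and hence the agreement on origin, axes and $head_{target}$ persist; no spurious symmetry can appear because a configuration with a multiplicity point is pinned by that point; and $P_{11}$ is unchanged since moving the head does not alter $\mathcal C\setminus\{head\}$, so the head remains the unique enabled robot and keeps heading to $(0,2)$. After this bounded sequence of steps the head is at $head_{target}$, whence $\mathcal C=\mathcal C_{final}$: $P_0$ holds, no robot is enabled, and Stage~8 --- and with it the whole algorithm --- terminates.

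The only genuine obstacle is the collision analysis of the head's final two-segment walk, and it collapses to the single geometric observation that $(0,2)$ is the bottom-left corner of the target's enclosing rectangle, which together with $P_{11}$ forces every visited cell to be empty; coordinate stability is free once a multiplicity point is present, and the case $head\in\mathcal T_g$ is a one-line consequence of $P_{11}$ and the head already occupying the gathering vertex.
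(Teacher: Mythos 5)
Your proof is correct and follows essentially the same route as the paper's: the same case split on the head's team, the observation that $|\mathcal{T}_g|>2$ keeps the multiplicity point (and hence the agreed origin and axes) intact while the head departs, and the up-then-across walk to the head's target position; you simply make explicit the emptiness of every cell on that walk, which the paper leaves implicit. One small correction: in the paper $(0,2)$ is the leading \emph{corner} of the target's enclosing rectangle, so $head_{target}=t_1$ is the first occupied cell in the scan starting from $(0,2)$ and need not equal $(0,2)$ itself --- this does not damage your collision analysis, since the cells of row $y=2$ preceding $t_1$ in the scan are empty by definition and the walk simply extends along that row.
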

\begin{proof}
    When in stage~8, then head robot if is in team gather then $P_0$ is already true. As head will not move in this case. But if head is in $\mathcal{T}_{Apf}$ then in this stage all the other robots are in their destination points without the head robot. As already $P_{14}$ is true, and there are $|\mathcal{T}_g| > 2$), so here by the move of head robot $P_{14}$ remains true. As the pattern will be form with respect to the $(0,2)$ point, so head will upward and then move to its destination point. If After the movement of head robot, $P_0$ becomes true. 
\end{proof}

So after completing these stages $P_2$ and $P_3$ conditions will be true. No collision will occur during the movement of robots throughout the algorithm. So the gathering and arbitrary pattern formation will be done by the robots simultaneously on an infinite grid by oblivious robots.\\
The proposed algorithm is depicted in the flowchart in fig~\ref{flowchart1}. Starting from any configuration where $\neg P_0$ from the diagram fig~\ref{flowchart1} each directed path starting from the node where $\neg P_0$ ends at the node where $P_0$ is true.
Hence we can conclude the theorem.

\begin{theorem}
Gathering and Arbitrary pattern formation are solvable in \textsc{Async} by $\mathcal{T}_{Apf}$ and $\mathcal{T}_g$ robots from any asymmetric initial configuration.
\end{theorem}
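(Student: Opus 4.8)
The plan is to obtain the final theorem by chaining the eight per-stage theorems already proved, viewing the whole algorithm as a finite-state machine whose states are Stages~1--8 and whose transitions are exactly those drawn in the flowchart of Figure~\ref{flowchart1}. First I would verify that the stage preconditions, written as Boolean formulas over $P_0,\dots,P_{14}$ (Table~\ref{TABLE-1}), are exhaustive and pairwise disjoint on every configuration reachable from an asymmetric $\mathcal{C}_I$; together with the global-coordinate agreement of Section~\ref{global}, this guarantees that whenever a robot performs \textsc{Look}--\textsc{Compute}, it unambiguously identifies the current stage and the unique move prescribed for its team, so the algorithm is well defined under \textsc{Async}.

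Next I would run a single induction along the global sequence of \textsc{Move} operations to propagate a global invariant, drawing each inductive step from the corresponding stage theorem. The clauses to maintain are: the configuration stays asymmetric (hence the leading corner, $head$, $tail$ and the two axes stay well defined), the agreed coordinate system does not change, and no two robots ever share a non-gathering vertex; the one sanctioned exception is the deliberate creation of the multiplicity point at the origin in Stage~5/6, after which weak multiplicity detection lets every robot re-anchor its coordinate system at that point consistently. Stitching these local guarantees together yields: at every time, either the current configuration is asymmetric or it contains exactly one multiplicity point (the eventual gathering point), no collision has occurred away from that point, and the coordinate system in force is the one determined by the current leading corner, or by the multiplicity point once it exists.

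For progress and termination I would trace the flowchart edge by edge: each edge is a stage completion, which its theorem guarantees after finitely many moves, and along every directed path the set of true $P_i$ strictly grows. The only back-edge is the transient Stage~3\,$\to$\,Stage~1 return noted inside the proof of the Stage~3 theorem (an upward $tail$ move to restore $P_5$); I would observe that restoring $P_5$ this way never falsifies $P_7$ or the horizontal-dimension predicate $P_8$, so $P_5\wedge P_6\wedge P_7\wedge P_8$ stabilises after boundedly many oscillations and the automaton is acyclic up to that bounded transient. Hence, starting from any asymmetric initial configuration with $\neg P_0$, every directed path of Figure~\ref{flowchart1} reaches the node $P_0$ in finite time; since $P_0$ is precisely $\mathcal{C}=\mathcal{C}_{final}$, this means $P_2$ (all $\mathcal{T}_g$ robots coincide) and $P_3$ ($\mathcal{C}_{Apf}=\mathcal{C}_{target}$) hold simultaneously, which is exactly the assertion.

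The hard part is making all of the above robust to the adversarial asynchrony: a robot may take its snapshot, remain idle while others move, and then act on a stale view — possibly one belonging to a stage that has since been superseded. I would handle this stage by stage, showing that in each stage the eligibility condition singles out a unique mover (the $tail$, the $head$, or the unique ``rightmost robot on the $x$-axis with empty right part'', etc.), that a pending move can neither create a second eligible robot nor a collision, and that executing a one-step-stale move still lands the mover on a correct grid vertex (because $head$ and $tail$ moves commute with any other legal pending move, and the one-by-one motions of Stages~4 and~6 are serialised by the eligibility predicate itself). I would also have to check that the multiplicity point, once created, persists as the common origin through Stages~6--8 — this is where the standing hypothesis $|\mathcal{T}_g|>2$ is used, since the $\mathcal{T}_g$ robots converging there keep it a genuine multiplicity while $\mathcal{T}_{Apf}$ robots leave for their target cells. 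Threading this robustness argument through all eight stages, together with the bounded-transient observation for Stage~3, is the real content of the proof; once it is in place, the theorem follows by composing the eight stage theorems along the flowchart.
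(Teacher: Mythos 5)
Your proposal takes essentially the same route as the paper: the paper's own proof of this theorem is precisely the composition of the eight stage theorems along the flowchart of Figure~\ref{flowchart1}, observing that each stage terminates in finitely many moves without collision and that every directed path from a $\neg P_0$ node ends at the $P_0$ node. The additional care you devote to disjointness of the stage preconditions, stale snapshots under \textsc{Async}, and the bounded Stage~3\,$\to$\,Stage~1 transient is more explicit than what the paper writes, but it is the same argument, not a different one.
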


\section{Conclusion}
In this work, we claim that by our algorithm two different teams of robots can simultaneously gather and form an arbitrary pattern on an infinite grid. A robot only knows that in which team it belongs to between gathering and \textsc{Apf}. To our knowledge in a grid network, this is the first work where two different tasks are performed simultaneously by two different teams of robots. Here we assume the grid is infinite and the visibility is full. So for further work, we can extend this work by assuming limited visibility and also when the grid is finite. 

\section{Acknowledgement} 
The first author is supported by West Bengal State Government Fellowship Scheme 
[P-1/RS/23/2020]. The second author is supported by University Grants Commission (UGC), the Government of India [NTA Reference no.: 201610019984]
The third author is supported by University Grants Commission (UGC), the Government of India [1132/(CSIR-UGC NETJUNE2017]. The last author is supported by SERB India [Project file no.: MTR/2021/000835].

%
%
%
 \bibliographystyle{splncs04}
 \bibliography{name}

\begin{thebibliography}{10}
\providecommand{\url}[1]{\texttt{#1}}
\providecommand{\urlprefix}{URL }
\providecommand{\doi}[1]{https://doi.org/#1}

\bibitem{bhagat2020}
Bhagat, S., Chakraborty, A., Das, B., Mukhopadhyaya, K.: Gathering over meeting nodes in infinite grid. In: Conference on Algorithms and Discrete Applied Mathematics. pp. 318--330. Springer (2020)

\bibitem{BhagatS20}
Bhagat, S., Flocchini, P., Mukhopadhyaya, K., Santoro, N.: Weak robots performing conflicting tasks without knowing who is in their team. In: Mukherjee, N., Pemmaraju, S.V. (eds.) {ICDCN} 2020: 21st International Conference on Distributed Computing and Networking, Kolkata, India, January 4-7, 2020. pp. 29:1--29:6. {ACM} (2020). \doi{10.1145/3369740.3369794}, \url{https://doi.org/10.1145/3369740.3369794}

\bibitem{BoseAKS20}
Bose, K., Adhikary, R., Kundu, M.K., Sau, B.: Arbitrary pattern formation on infinite grid by asynchronous oblivious robots. Theor. Comput. Sci.  \textbf{815},  213--227 (2020). \doi{10.1016/j.tcs.2020.02.016}, \url{https://doi.org/10.1016/j.tcs.2020.02.016}

\bibitem{cicerone20}
Cicerone, S., Fonso, A.D., Stefano, G.D., Navarra, A.: Arbitrary pattern formation on infinite regular tessellation graphs. CoRR  \textbf{abs/2010.14152} (2020), \url{https://arxiv.org/abs/2010.14152}

\bibitem{cord2016}
Cord-Landwehr, A., Fischer, M., Jung, D., Meyer auf~der Heide, F.: Asymptotically optimal gathering on a grid. In: Proceedings of the 28th ACM Symposium on Parallelism in Algorithms and Architectures. pp. 301--312 (2016)

\bibitem{das2019}
Das, S., Giachoudis, N., Luccio, F.L., Markou, E.: Gathering of robots in a grid with mobile faults. In: International Conference on Current Trends in Theory and Practice of Informatics. pp. 164--178. Springer (2019)

\bibitem{d2012gathering}
d’Angelo, G., Stefano, G.D., Klasing, R., Navarra, A.: Gathering of robots on anonymous grids without multiplicity detection. In: International Colloquium on Structural Information and Communication Complexity. pp. 327--338. Springer (2012)

\bibitem{fischer2017}
Fischer, M., Jung, D., Meyer auf~der Heide, F.: Gathering anonymous, oblivious robots on a grid. In: International Symposium on Algorithms and Experiments for Sensor Systems, Wireless Networks and Distributed Robotics. pp. 168--181. Springer (2017)

\bibitem{GHOSH2022}
Ghosh, S., Goswami, P., Sharma, A., Sau, B.: Move optimal and time optimal arbitrary pattern formations by asynchronous robots on infinite grid. Int. J. Parallel Emergent Distributed Syst.  \textbf{38}(1),  35--57 (2023). \doi{10.1080/17445760.2022.2124411}, \url{https://doi.org/10.1080/17445760.2022.2124411}

\bibitem{pritam22}
Goswami, P., Ghosh, S., Sharma, A., Sau, B.: Gathering on an infinite triangular grid with limited visibility. CoRR  \textbf{abs/2204.14042} (2022). \doi{10.48550/arXiv.2204.14042}, \url{https://doi.org/10.48550/arXiv.2204.14042}

\bibitem{manash22}
Kundu, M.K., Goswami, P., Ghosh, S., Sau, B.: Arbitrary pattern formation by asynchronous opaque robots on infinite grid. CoRR  \textbf{abs/2205.03053} (2022). \doi{10.48550/arXiv.2205.03053}, \url{https://doi.org/10.48550/arXiv.2205.03053}

\bibitem{KunduGGS22}
Kundu, M.K., Goswami, P., Ghosh, S., Sau, B.: Arbitrary pattern formation by opaque fat robots on infinite grid. Int. J. Parallel Emergent Distributed Syst.  \textbf{37}(5),  542--570 (2022). \doi{10.1080/17445760.2022.2088750}, \url{https://doi.org/10.1080/17445760.2022.2088750}

\bibitem{Suzuki96}
Suzuki, I., Yamashita, M.: Distributed anonymous mobile robots - formation and agreement problems. In: Problems, in the Proceedings of the 3rd International Colloquium on Structural Information and Communication Complexity (SIROCCO '96. pp. 1347--1363 (1996)

\end{thebibliography}
%

\end{document}